\documentclass[11pt]{article}
\usepackage[letterpaper, margin=1in]{geometry}

\usepackage{setspace}
\usepackage{tcolorbox}
\usepackage{multirow}
\usepackage{refcount}
\usepackage{picture}
\newcounter{tcbbox}

\newcommand{\tcbref}[1]{Outline~\ref{#1}}

\singlespacing

% Packages for better document formatting
\usepackage{amsmath, amssymb} % For math equations and symbols
\usepackage{graphicx} % For including figures
\usepackage{cite} % For citing references
\usepackage{url} % For formatting URLs
\usepackage{verbatim}

% Title information
\title{Sublinear-Time Sampling of Spanning Trees in the Congested Clique}
\author{Sriram V. Pemmaraju \\ University of Iowa \\ sriram-pemmaraju@uiowa.edu \and Sourya Roy \\ University of Iowa \\ sourya-roy@uiowa.edu \and Joshua Z. Sobel \\ University of Iowa \\ joshua-sobel@uiowa.edu}
\date{}

\usepackage{xcolor}
\usepackage{algorithm}
\usepackage{amsmath}
\usepackage{amsthm}
\usepackage{amssymb}
\usepackage[T1]{fontenc}
\usepackage[colorlinks=true,allcolors=blue]{hyperref}
\usepackage{algpseudocodex}

\newtheorem{lemma}{Lemma}
\newtheorem{corollary}{Corollary}

\newtheorem{definition}{Definition}
\newtheorem{fact}{Fact}

\newcommand{\cc}{{\scshape CongestedClique}}
\newcommand{\congest}{{\scshape Congest}}
\newcommand{\pln}{O(\mbox{poly} \log n)}
\newcommand{\dbling}{\textsc{Doubling} }
\newcommand{\id}{\texttt{Id} }
\usepackage{svg}
\usepackage{todonotes}
\definecolor{lgray}{rgb}{0.01,0.199,0.1}

%%%%%% bfs..

%%%%%% math bbs..
\newcommand{\E}{\mathbb{E}}

\newcommand{\M}{\mathbb{M}}

%%%%%% cals..

\newcommand{\calH}{\mathcal{H}}

%%%%%% sfs..

\newcommand{\sfM}{{\sf M}}

%\newcommand{\sfN}{{\sf N}}

%\newcommand{\cov}{{\sf{ \small COV}}}

%%%%%% overlines..

%%%%%% others..

\newcommand{\chk}{{\sf CheckTruncationPoint}}

\newcommand{\adj}{\mathbf{P}}

\newcommand{\old}{{\sf Old}}
\newcommand{\sct}{\mathbf{Q}}
\newcommand{\scc}{\mathbf{S}}
\newcommand{\schur}{\textsc{Schur}}
\newcommand{\shortcut}{\textsc{ShortCut}}
\newcommand{\Ct}{\textsf{Count}}

\usepackage{cleveref}
\usepackage{thm-restate}

\begin{document}
\maketitle

\begin{abstract}
We present the first sublinear-in-$n$ round algorithm for sampling an approximately uniform spanning tree of an $n$-vertex graph in the \cc{} model of distributed computing.  In particular, our algorithm requires $\Tilde{O}(n^{0.657})$ rounds for sampling a spanning tree within total variation distance $1/n^c$, for arbitrary constant $c > 0$, from the uniform distribution.  More precisely, our algorithm requires $\Tilde{O}(n^{1/2 + \alpha})$ rounds, where $O(n^\alpha)$ is the running time of matrix multiplication in the \cc{} model (currently $\alpha = 1 - 2/\omega = 0.157$, where $\omega$ is the sequential matrix multiplication time exponent).
We can adapt our algorithm to give exact rather than approximate samples, but with a larger, though still $o(n)$, runtime of $\Tilde{O}(n^{2/3+\alpha}) = O(n^{.824})$.  

In a remarkable result, Aldous (SIDM 1990) and Broder (FOCS 1989) showed that the first visit edge to each vertex, excluding the start vertex, during a random walk forms a uniformly chosen spanning tree of the underlying graph. The biggest challenge with implementing the Aldous-Broder algorithm in a distributed setting is that it requires a very long random walk, i.e., a walk that covers the graph and might have to be $\Theta(mn)$ steps long for an $n$-vertex, $m$-edge graph, in the worst case.
The most common technique for constructing random walks in ``all-to-all'' communication models such as \cc{} and MPC are bottom-up techniques which start with many short walks and repeatedly stitch these together to build longer walks. 
These techniques fail to efficiently build the long random walk required by the Aldous-Broder algorithm.
Our algorithm is a significant departure from known techniques, featuring a top-down walk filling approach paired with  Schur complement graphs for walk shortcutting.  To make this idea work in the \cc{} model, we present a novel compressed random walk reconstruction algorithm, based on randomly sampling a weighted perfect matching.

In addition, we show how to take somewhat shorter random walks even more efficiently in the \cc{} model, obtaining an
$O(\log^3 n)$-round algorithm for uniformly sampling spanning trees from graphs with $O(n\log n)$ cover times.  These results are obtained by adding a load balancing component to the random walk algorithm of Bahmani, Chakrabarti and Xin (SIGMOD 2011) that uses the bottom-up ``doubling'' technique.
\end{abstract}

\section{Introduction}
Random spanning trees have been a fascinating area of mathematical study given their close connections to electrical circuits and random walks, dating back to Kirchoff in the 19th century.  Of particular interest is the Matrix-Tree theorem; usually credited to Kirchoff, although a more thorough account of its history is described in \cite{Kirchoff}.  This theorem states that the number of spanning trees of any graph can be found by taking the determinant of a minor of the graph Laplacian.

The connections between random walks and random spanning trees extend to efficient algorithms for randomly generating uniform spanning trees of a graph. 
This began with Aldous\cite{AAB} and Broder\cite{BAB} independently discovering that the set of edges used to first visit each vertex during a random walk (except the starting vertex of the walk which can be chosen arbitrarily) form a uniformly chosen spanning tree of the graph.  Since the expected time needed to visit every vertex of the graph, the \textit{cover time}, is known to be $O(mn)$ \cite{CoverTime}, for an $n$-vertex, $m$-edge graph, this immediately leads to an $O(mn)$ expected time algorithm for uniformly sampling spanning trees of a graph exactly.  Wilson found a faster random walk algorithm \cite{WilsonsAlg} for sampling spanning trees with an expected runtime of the average \textit{hitting time} of the graph; however, this algorithm still has an expected $\Theta(mn)$ runtime in the worst case.  Improvements have been made to the base Aldous-Broder algorithm. In particular, the problem with the algorithm is that while many distinct vertices are visited quickly at the beginning of the random walk, it can take a long time for the last few vertices to be visited.  This was addressed by the shortcutting method introduced by Kelner and Mądry \cite{MK}.  In its original form, this result only allowed for approximate sampling.  However, this has been extended to exact sampling by a trick from Propp, see \cite{madryOther,madryThesis}.  The shortcutting method was further improved by Mądry, Straszak, and Tarnawski \cite{madryOther} and by Schild \cite{shortcutAS}.  The result by Schild reduces the runtime to $O(m^{1+o(1)})$, for a graph with $m$ edges.  The high level idea of the shortcutting method is that once a part of the graph has been fully visited by a random walk, all future visits to that part of the graph are no longer relevant to the generated spanning tree.  Thus, instead of rewalking over parts of the graph that have already been visited, the random walk can take a \textit{shortcut}, jumping directly to a not yet fully visited part of the graph.  Finally, using Markov chain Monte Carlo (MCMC) methods rather than random walks, an almost-linear, $O(m \log^2 n)$ time, algorithm for approximately sampling spanning trees in the sequential setting has been shown by Anari, Liu, Gharan, Vinzant, and Vuong \cite{UpDownWalk}.

These impressive algorithmic advances for sampling random spanning trees are motivated by several applications of random spanning trees in theoretical computer science, including graph sparsification \cite{STandExpander, GraphSparsifier, DOLEV202321}, breakthroughs in approximation algorithms for the traveling salesman problem \cite{ATSP, STSP, NTSP}, and the $k$-edge connected multi-subgraph problem \cite{KarlinKleinGharanZhangSTOC2022}.

The focus of this paper is \textit{distributed} random spanning tree sampling. Specifically, we design our algorithms in the well-known \cc{} model. This model is a simple, bandwidth-restricted ``all-to-all'' communication model for distributed computing.
A wide variety of classical graph problems, including maximal independent set (MIS) \cite{ghaffari17_distr_mis_all_all_commun,GhaffariGKMRPODC18}, $(\Delta+1)$-coloring \cite{CzumajDPPODC2020}, ruling set \cite{HegemanPS14,CambusKPU23}, minimum spanning tree (MST) \cite{ghaffari16_mst_log_star_round_conges_clique,hegeman15_towar_optim_bound_conges_clique,LotkerPPPSPAA2003,jurdzinski18_mst_o_round_conges_clique,DBLP:conf/fsttcs/PemmarajuS16}, shortest paths \cite{doi:10.1137/19M1286955,DoryParterJACM2022}, minimum cut \cite{GhaffariNowickiPODC2018}, spanners \cite{ParterYogevDISC2018} and clique detection and listing \cite{DolevLP12,MatrixMult,censorhillel2024distributedsubgraphfindingprogress} have been studied in the \cc{} model.
The running times of the fastest algorithms for these problems range from $O(1)$ for MST to $O(n^{1-2/p})$ for detecting a clique of size $p$.
Algorithms in the \cc{} model are often the starting point for algorithms in more realistic ``all-to-all'' communication models for large-scale cluster computing, such as the $k$-machine model \cite{klauck15_distr_comput_large_graph_probl}, the Map Reduce model \cite{KarloffSuriVassilvitskiiSODA2010}, and the closely related Massively Parallel Computation (MPC) model \cite{AndoniNOYSTOC2014,BeameKSJACM2017,GoodrichSZISAAC2011}.

While there is vast literature in distributed computing on finding an arbitrary feasible solution to constraint satisfaction problems, e.g., finding a maximal independent set, a $(\Delta+1)$-coloring, or a spanning tree, there is relatively limited understanding of sampling from a distribution over the set of feasible solutions. 
For example, due to a series of papers over the last 2 decades \cite{LotkerPPPSPAA2003,hegeman15_towar_optim_bound_conges_clique,ghaffari16_mst_log_star_round_conges_clique,jurdzinski18_mst_o_round_conges_clique, nowickiMST}, the MST problem can be solved in just $O(1)$ deterministic rounds in the \cc{} model. But as far as we know, there is no work on sampling a random spanning tree in the \cc{} model.

There has been limited work on sampling combinatorial objects (e.g., colorings, independent sets, spanning trees) in other standard models of distributed computing such as \textsc{Congest} and \textsc{Local}; see \cite{FischerGhaffariDISC2018,CongestRandomWalk, WhatCanBeSampledLocally, distMetSampler, distJVV, LLLSampling, distSymBreaking, distFlipDynamics, exactDistSampling} for examples.  For instance, in \cite{WhatCanBeSampledLocally} the authors present a distributed version of the sequential Metropolis-Hastings algorithm for weighted local constraint satisfaction problems; subsequent work \cite{FischerGhaffariDISC2018,distSymBreaking} has improved this result.  However, while there has been some work on distributed sampling, there are still a lot of fundamental gaps in our understanding. This paper aims to fill some of these gaps.

\subsection{Preliminaries}
\label{section:Preliminaries}
Here we describe notation that will be used in subsequent sections.  The notation $\tilde{O}(f(n))$ refers to $O(f(n) \cdot \text{poly(log(n))})$.  We use $G$ to denote the input graph to our algorithm and following convention, $n$ and $m$ will denote the number of vertices and edges in $G$ respectively.  We use $\adj$ to denote the transition matrix of the random walk on $G$.  In particular, any vertex $a$ has equal probability,
i.e., $1/\text{degree}(a)$, of transitioning to any of its neighbors.  We use $\adj[a,b]$ to refer to the row-$a$, column-$b$ entry in $\adj$ and $\adj[a,*]$ to refer to row $a$ of $\adj$.  Finally, we take slight liberty with notation to let $(a,b)\in seq$ refer to the presence of consecutive elements $a,b$ in the sequence $seq$.

\subsection{Our Results}
\label{subsection:results}

\paragraph{Main Result: Sampling Random Spanning Trees in Sublinear Rounds}

Our main contribution in this paper is to show that the Aldous-Broder random spanning tree algorithm can be implemented in the \cc{} model in $o(n)$ rounds. We prove the following theorem\footnote{The requirement that the graph is unweighted can be slightly loosened.  We can allow edge weights to be positive integers bounded by $W = O(n^{\beta})$ for arbitrary constant $\beta$.  Here, the probability of a spanning tree is proportional to the product of its edge weights.  Likewise, the edge taken during each step of a random walk is chosen proportional to its edge weight.  The main reason edge weights need to be bounded is that the cover time of the graph is bounded by $O(W\cdot|V|\cdot|E|)$.  For simplicity, we will only focus on unweighted graphs.  With weighted edges, different choices will have to be made for some parameters in the algorithm.}.

\begin{restatable}{theorem}{mainresult}
\label{main-result}
There is an $\Tilde{O}(n^{1/2 + \alpha})$ round algorithm in the \cc{} model for approximately generating a uniform spanning tree of an arbitrary unweighted graph within total variation distance $\epsilon = \Omega(\frac{1}{n^{c}})$, for arbitrary $c>0$, from the true uniform distribution, where $O(n^\alpha)$ is the running time for matrix multiplication in the \cc{} (currently $\alpha = 0.157$).
In this algorithm, in every round, every machine performs polynomial-time local computations.
\end{restatable}

The most common technique for constructing random walks in ``all-to-all'' communication models such as \cc{} and the MPC model is the ``doubling technique'', a bottom-up technique which starts with many short walks and repeatedly stitches these together to build longer walks. 
At a high level, the idea is for every vertex $v$ to start an iteration possessing some number of walks of length $L$ originating at $v$. 
During the iteration, pairs of walks are ``matched'' and stitched together to create random walks of length $2L$. 

This technique is quite useful for constructing short, e.g., $O(\text{poly}(\log n))$ length, random walks, designed for PageRank estimation \cite{BCX,LackiMOSSTOC2020}, though it needs additional load balancing to be as efficient as possible. 
But the Aldous-Broder algorithm requires a random walk that covers the graph; such a walk has, in the worst case, expected $\Theta(mn)$ length for an $n$-vertex, $m$-edge graph. 
In order to construct a length-$\ell$ random walk, the algorithm of Bahmani, Chakrabarti, and Xin \cite{BCX} starts with every vertex $v$ holding $\ell$ length-1 random walks (i.e., random edges) originating at $v$. To implement the Aldous-Broder algorithm, $\ell$ needs to be $\Theta(mn)$, but for this setting of $\ell$, there is not enough bandwidth to even complete the first ``doubling'' iteration efficiently. 
The algorithm of \L{}\k{a}cki, Mitrovi\'{c}, Onak, and Sankowski \cite{LackiMOSSTOC2020} suffers from the same bottleneck; in fact, in this algorithm each vertex starts off holding even more length-1 walks initially.  Our algorithm is a significant departure from these well-known bottom-up techniques, featuring a novel \textit{top-down walk filling} approach.

At a very high level, we use the top-down approach to construct the walk phase by phase.
We impose a key restriction on the walk constructed in a phase: \textit{the walk in a phase contains $\sqrt{n}$ distinct vertices, not already visited in previous phases}.
Note that even with this restriction, the walk in a phase can be quite long, in fact $\Theta(m n)$ in length.
Visiting $\sqrt{n}$ distinct, new vertices in each phase implies that our algorithm uses $O(\sqrt{n})$ phases. The algorithm's $\tilde{O}(n^{1/2+\alpha})$ run-time arises from the fact that each phase takes roughly matrix multiplication time. Ensuring that the walk is truncated as soon as it visits the $\sqrt{n}$-th distinct vertex is non-trivial (see the discussion of ``Distributed Walk Truncation'' in Section \ref{section:overviewTechniques}), but this idea plays a crucial role in ensuring that there is enough bandwidth to efficiently perform a distributed sampling of fill in vertices that are needed to construct the walk in a phase. 
However, communicating these distributed samples to a leader machine for walk reconstruction turns out to be prohibitively expensive. We deal with this challenge by showing that the distributed samples can be compressed and communicated to the leader in such a way that the leader can produce a local sample with the same correct distribution. Somewhat remarkably, we reduce the problem of producing a local sample with the correct distribution to uniformly sampling a weighted perfect matching. This sampling can be performed by a local polynomial-time computation using the classical results of Jerrum, Sinclair, and Vigoda on approximating the permanent \cite{Permanent} and Jerrum, Valiant, and Vazirani \cite{JVV} on reducing approximate sampling to approximate counting in polynomial time.
After the first phase has been completed, it becomes critical for the efficiency of the algorithm, to avoid vertices already visited in previous phases. To implement this idea, we appeal to the Schur complement graph as used in \cite{Kyng_2017, shortcutAS} and shortcutting as used by Kelner and Mądry \cite{MK}.
While these ideas are familiar in the sequential setting in the context of randomly sampling spanning trees, our contribution is to show that these auxiliary graphs can be efficiently constructed using fast matrix multiplication in the \cc{} model.
We give a high level overview our algorithm, highlighting all of these techniques in more detail, in Section \ref{section:overviewTechniques}.

\paragraph{Remark.} We can adapt our algorithm to give exact rather than approximate samples.  This leads to a slower, though still $o(n)$ runtime of $\tilde{O}(n^{2/3+\alpha})$.  We sketch this approach in the \hyperref[Exact Sampling Appendix]{Appendix}.

\paragraph{Linear-Length Walks in Polylogarithmic Rounds} 

As mentioned earlier, there are algorithms \cite{BCX,LackiMOSSTOC2020} that use the ``doubling'' technique to construct random walks in a bottom-up manner, starting from a collection of length-1 walks. 
Unfortunately, these algorithms end up being inefficient because they are not inherently load-balanced. 
This is true even if our goal is to construct a $\Theta(n)$-length, random walk.  This is despite the fact that the \cc{} model has an overall bandwidth of $\Theta(n^2 \log n)$ bits, 
which is sufficient for each doubling iteration in \cite{BCX}, when $\ell = \Theta(n)$.  We present a load balanced version of the ``doubling'' technique and show how to efficiently construct relatively short random walks in the \cc{} model. The specific theorem we prove and its consequences are provided below.  
The most interesting instances of this theorem are walks of length $O(n\cdot \text{poly}(\log n))$, which we can construct in $O(\text{poly}(\log n))$ rounds, and walks of length $O(\text{poly}(\log n))$, which we can construct in $O(\log\log n)$ rounds. As described in \cite{BCX,LackiMOSSTOC2020}, walks of length $O(\text{poly}(\log n))$ are of particular interest for approximating PageRank.

\begin{restatable}{theorem}{secondResult}
\label{loadBalancing}
There is an algorithm for taking a random walk of length $O(\tau)$ in the \cc{} model that runs in 
\begin{itemize}
\item $O\left(\frac{\tau}{n} \log \tau \log n\right)$-rounds with high probability for $\tau = \Omega(n/\log n)$
\item $O(\log \tau)$-rounds with high probability for $\tau = O(n/\log n)$.
\end{itemize}
\end{restatable}

\begin{corollary}
For a graph with cover time $\tau$, we can sample a random spanning tree in $\Tilde{O}(\tau/n)$ rounds in the \cc{} model with high probability.
\end{corollary}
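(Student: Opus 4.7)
The plan is to apply the Aldous--Broder algorithm: run a random walk long enough to cover the graph with high probability, and output the edge used to first visit each non-start vertex. By the Aldous--Broder theorem, this collection of edges is distributed as a uniformly random spanning tree. Thus it suffices to (i) build a walk long enough to cover $G$ w.h.p.\ via the preceding theorem, and (ii) extract the first-visit edges in $\tilde O(\tau/n)$ rounds.

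For (i), I amplify the cover-time guarantee in the standard way: by Markov's inequality a walk of length $2\tau$ covers the graph with probability at least $1/2$, and by the strong Markov property chaining $\Theta(\log n)$ such segments yields a walk of length $L = O(\tau \log n)$ that covers the graph with probability $1 - 1/n^c$ for any desired constant $c$. Since every connected $n$-vertex graph has cover time $\tau \ge n-1$, we have $L = \Omega(n \log n)$, which falls under the first case of the preceding theorem; its round complexity is $O\!\left((L/n) \log L \log n\right) = O\!\left((\tau/n) \log^3 n\right) = \tilde O(\tau/n)$.

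For (ii), the theorem's load-balanced doubling procedure leaves the walk $v_0, v_1, \ldots, v_L$ distributed with $\tilde O(L/n)$ entries per machine. I then form the triples $(v_i, v_{i-1}, i)$ for $1 \le i \le L$, sort them primarily by $v_i$ and secondarily by $i$, retain the smallest-$i$ record per vertex, and route the resulting $n-1$ first-visit edges to their endpoints. Lenzen-style sorting and routing handles a batch of $n^2$ items (i.e., $n$ per machine) in $O(1)$ rounds, so processing all $L$ triples in $\tilde O(L/n^2)$ batches plus an $O(\log n)$-round aggregation of per-batch minima costs $\tilde O(\tau/n)$ rounds overall, which is absorbed by the walk-construction bound. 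The only real subtlety is making this extraction itself load-balanced so that no machine is overwhelmed by the popular vertices visited most often; this is handled because the theorem already emits the walk in a balanced layout, and sorting by $(v_i, i)$ in the \cc{} model redistributes responsibility across machines.
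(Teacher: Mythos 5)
Your proposal is correct and matches the intended argument: the paper treats this corollary as an immediate consequence of the doubling theorem plus Aldous--Broder, with exactly the amplification you describe (a length-$O(\tau\log n)$ walk covers w.h.p., and $\tau=\Omega(n)$ puts you in the first case of the theorem). One small inaccuracy that does not affect correctness: at the end of the doubling algorithm the entire length-$\tau$ walk starting at $v$ resides at machine $v$ (after the final merge, $k=1$), so the first-visit edges can be computed locally and the $n-1$ tree edges distributed in $O(1)$ rounds --- your sorting-based extraction is unnecessary, though it would also work.
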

This result is interesting as many graphs have an $O(n \log n)$ cover time, notably expanders and Erd\"{o}s–R\'{e}nyi random graphs $G(n, p)$ with $p = \Omega(\log n/n)$ \cite{BAB, BK89, commuteTimes}.  It is also worth noting that for these graphs of near uniform degree, slightly better results can be obtained since the doubling algorithm is intrinsically load-balanced.  However, there exist dense, highly irregular graphs with $O(n\log n)$ cover times.  For example, $K_{n-\sqrt{n}, \sqrt{n}}$ can be seen to have $O(n \log n)$ cover time, applying the coupon collector argument.

\subsection{Overview of Our Algorithm and Key Techniques}
\label{section:overviewTechniques}
We first describe a very simple, sequential, recursive algorithm for generating a random walk; this will be a top-down algorithm and serve as our starting point.  By choosing the length $\ell$ of the walk to be $\Tilde{\Theta}(n^3)$, by the result of Aldous and Broder \cite{AAB, BAB}, we also obtain a random spanning tree with high probability.  We then present a first attempt at porting this algorithm to the \cc{} model.  Numerous challenges arise in porting our algorithm to the \cc{} model and we present a variety of new ideas to overcome these challenges. In our final attempt, we describe, at a high level, the first $o(n)$-round algorithm for sampling random spanning trees. For all of these algorithms we will choose $\ell$ to be a power of two for convenience.
\paragraph{A Sequential Algorithm}
\label{formulaPar}

Consider the following recursive algorithm that samples a uniformly distributed random walk of length $\ell$ starting at an input vertex, $s$. The algorithm first samples an end vertex $e$ of an $\ell$-length random walk that starts at $s$  and then feeds $s,e$ and $\ell$ to a recursive subroutine, that we call \textbf{Fill}. The subroutine \textbf{Fill} then samples a midpoint vertex $m$ of an $\ell$-length random walk that starts at $s$ and ends at $e$, using Bayes' rule.  More precisely, we want to sample the midpoint of an $\ell$ length random walk beginning at $s$ and conditioned on ending at $e$.  By the Markov property of random walks, the probability that vertex $v$ is the midpoint of such a walk is proportional to the probability that an $\frac{\ell}{2}$ length walk starting at $s$ ends at $v$ times the probability that another $\frac{\ell}{2}$ length walk starting at $v$ ends at $e$.  Then the subroutine recursively generates an $\ell/2$-length walk with $(s,m)$ as start-end vertices and another $\ell/2$-length walk with $(m,e)$ as start-end vertices. Finally, it joins the two $\ell/2$-length walks and returns the resulting $\ell$-length walk. Note that all of the vertex sampling during an entire run of the algorithm can be done using $O(\log \ell)$ distinct powers of the transition matrix, $\adj$, of the input graph.  This includes the endpoint $e$ and all of the midpoints.  We compute all of these matrices in advance, before entering the recursion. We give the complete outline of this algorithm in~\tcbref{box:seq1}.
 \begin{tcolorbox}[title=Outline 1. A sequential starting point,colback=white,colframe=black]
\refstepcounter{tcbbox}
\label{box:seq1}
\textbf{Top down filling algorithm }  \\
\textbf{Input:} Start vertex $s$ and walk length $\ell$\\
\textbf{Output:} Random walk of length $\ell$ beginning at $s$

\begin{enumerate}
    \item Compute $\adj,\adj^2,\adj^4,...,\adj^\ell$.
    \item Sample end vertex $e$ from $\adj^\ell[s,*]$
    \item Return \textbf{Fill}$(s,e,\ell)$.
\end{enumerate}

Subroutine: \textbf{Fill}\\
{Input:} Start vertex $s$, end vertex $e$, and walk length $\ell$\\
{Output:} Random walk of length $\ell$ beginning at $s$ and ending at $e$

  \begin{enumerate}
      \item If $\ell = 1$, Return $(s,e)$.
      \item Sample midpoint $m$ of an $\ell$-length walk starting at $s$ conditioned on ending at $e$.  Specifically, the unnormalized distribution is $\big(\adj^{\ell/2}[s,v]\cdot \adj^{\ell/2}[v,e]\big)_{v\in V}$.
      \item Return $\textbf{Join}\big(\textbf{Fill}(s,m,\frac{\ell}{2}), \textbf{Fill}(m,e,\frac{\ell}{2})\big)$.
  \end{enumerate}
\end{tcolorbox}
 
\vspace{1em}

\noindent
\textbf{First attempt at a \cc{} algorithm.}
We now focus on transforming the above outline of a sequential random walk sampling algorithm into a distributed algorithm in the \cc\ model. We begin by making a natural, structural shift -- in each level of the recursion all of the work will be done in parallel across multiple machines. Recall that during each level of the recursive sequential algorithm a set of midpoints is generated and inserted into the walk.  Our goal is to delegate this midpoint generation process to multiple machines and then to centrally collect the generated midpoints and insert them into their respective correct locations.  In our notation, we let $W$ be the walk under construction.  Note that $W$ is a \textit{partial walk} with ``holes'' that need to be filled.
We let $W[j]$ refer to the vertex at index $j$ of $W$; this indexing takes into account the yet-to-be-filled holes.  We let $W_i$ refer to the partial walk that has been constructed before level $i$ of the algorithm.  When we refer to adjacent vertices in $W_i$, we mean vertices that are only separated by holes.

We assume that we are given a start vertex $s$ and designate the machine $\sfM$ holding $s$ as the \textit{leader} machine. We initialize the walk with its first vertex, i.e., set $W[0] = s$. During the course of the algorithm, $\sfM$ is the machine that will store the random walk as it is constructed. 
We then compute the relevant powers of the transition matrix $\adj$ and sample the end point $e$ using the row $s$ of the matrix $\adj^{\ell}$.
We insert $e$ into $W$, as its last vertex, i.e., we set $W[\ell] = e$. We are now ready to fill in the rest of the walk level-by-level by sampling midpoints. 
Now assume that the algorithm is just about to enter the $i$-th level. At this time, $\sfM$ holds $W_i$, a partial walk with $2^{i-1}+1$ evenly spaced vertices.  Note that the ``gap'' between consecutive filled in vertices in $W_i$ is $\frac{\ell}{2^{i-1}}$.  
In other words, this is the number of vertices we need to fill between consecutive vertices in $W_i$.
Our goal is now to sample a midpoint between each consecutive pair of vertices in $W_i$.  For each distinct consecutive pair, $(p,q)$ in $W_i$, we assign a machine $\sfM_{p,q}$ to sample the midpoints for all occurrences of this pair in $W_i$.  This algorithm structure is motivated by the fact that once we fix a walk length, the probability distribution of a midpoint is completely determined by the endpoints $(p, q)$. Thus it makes sense for a single machine $M_{p,q}$ to sample all the midpoints for that pair.   
The outline of this first attempt at a \cc{} algorithm is presented in \tcbref{boxdist}.
Note that this description skips some lower level details, e.g., the computation of the powers of the $\adj$ in the \cc{} model. All of these details are provided in Section \ref{genAlg}.
The reader is also encouraged to consult Figure \ref{fig:midPointPlacement} to better understand this outline.

As presented, this algorithm faces two major bottlenecks in the \cc{} model. 
\begin{description}
    \item[Bottleneck 1:]First, in step $5(b)$, there could be many, i.e., $\omega(n)$, distinct start-end pairs $(p,q)$. This means that we cannot find a \textit{distinct} machine for each endpoint pair. We could address this problem by making a machine responsible for a collection of start-end pairs. This idea faces a fundamental bandwidth bottleneck. To correctly sample a midpoint to place within the start-end pair $(p, q)$, machine $\sfM_{p,q}$ needs to receive the probability distribution over all vertices in the graph occurring as midpoints for this start-end pair. This constitutes $n$ words of information. Thus, for a machine to receive all necessary probability distributions quickly, we can only assign a few distinct start-end pairs to a machine.
     
    \item[Bottleneck 2:]Secondly, the bandwidth doesn't exist for each machine to efficiently send $\Pi_{p,q}$ back to $\sfM$, as $c_{p,q}$ may be as large as $\Tilde{\Theta}(n^3)$.
\end{description}

\begin{tcolorbox}[title=Outline 2.  
A first \cc{} algorithm,colback=white,colframe=black]
\refstepcounter{tcbbox}
\label{boxdist}
  \textbf{A first attempt at a \cc{} algorithm}\\
  \textbf{Input:} Start vertex $s$ and walk length $\ell$\\
  \textbf{Output:} Random walk of length $\ell$ beginning at $s$

  \begin{enumerate}
       \item Set $\sfM$, the machine holding $s$, as the leader machine; set $W[0]=s$.
        %\item $\ell \gets \Tilde{\Theta}(n^3)$
       \item Compute $\adj^2,\adj^4, ..., \adj^\ell$.
       \item $\sfM$ samples end vertex $e$ from $\adj^\ell[s,*]$ and sets $W[\ell] = e$.
       \item $W_1 = W$
       \item For level $i = 1$ to $\log \ell$
        \begin{enumerate}
            \item $c_{p,q}:=$ No.~of times the vertices $p,q$ occur adjacently in $W_i$.                    
            \item For each distinct $(p,q)\in W_i$, $\sfM$ designates a machine $\sfM_{p,q}$.

            \item $\sfM$ sends $c_{p,q}$ to $\sfM_{p,q}$.  ~~~~~~~~~~~~~~~~~~~~~~~~~~~~~~~~~~~~~~~~~~~~~~~~~~~~
            \item $\sfM_{p,q}$ sends $\Pi_{p,q}$ to $\sfM$.
            \item $\sfM$ places the midpoints from each sequence in the corresponding positions in $W_i$ to obtain $W_{i+1}$. 
        \end{enumerate}
        \item Return $W$.
  \end{enumerate}
   
\end{tcolorbox}

\noindent
\textbf{A second attempt at a \cc{} algorithm.} Another shift in perspective helps us overcome these bottlenecks.  We leverage the fact that we only need vertices' first-visit edges, and can therefore skip over vertices previously visited during the random walk. 
On the basis of this idea, we split the generation of the random walk into phases.  In each phase, we generate a new segment of the walk containing $\sqrt{n}$ \textit{distinct} vertices, starting from the final vertex $s$ visited in the previous phase.  
Except for $s$, all vertices visited in the current phase are new, i.e., have not been visited in previous phases.
By limiting the number of distinct vertices visited in a phase, we resolve the first bottleneck mentioned above. 
Specifically, $\sqrt{n}$ distinct vertices in a walk segment implies $n$ distinct start-end $(p, q)$ pairs, which in turn implies the need for at most $n$ machines $\sfM_{p,q}$ for midpoint generation.  To avoid repetitively visiting the same vertices in multiple phases, we use short-cutting in later phases (described in more detail further below).  This ensures that the number of phases is $\Theta(\sqrt{n})$.  Each phase requires roughly matrix multiplication time, leading to the claimed runtime.  In order to visit $\sqrt{n}$ distinct vertices per phase, we use distributed walk truncation (described in more detail below).

\paragraph{Distributed Walk Truncation.}
In each phase, our algorithm ensures that the leader machine $\sfM$ only receives the midpoints up to some truncation point of its partial walk. The truncation point is such that the number of distinct vertices within the prefix of the partial walk up to the truncation point, including the new midpoints, does not exceed $\sqrt{n}$. This truncation idea seems daunting to implement at first because $\sfM$ has to determine a truncation point for the walk without even looking at the sequences $\Pi_{p,q}$ of new midpoints generated by machines $\sfM_{p,q}$. 
Recall the second obstacle mentioned above: there is not enough bandwidth for $\sfM$ to receive the full midpoint sequences $\Pi_{p,q}$ from the machines $\sfM_{p,q}$.
We solve this issue by using distributed binary search (see Section \ref{subsubsec: truncated} for more details). 
Specifically, $\sfM$ guesses a truncation point and distributes requests for midpoint generation
for start-end pairs in the partial walk up to the truncation point. 
We show how machines receiving midpoint generation requests can aggregate their answers efficiently to
determine if midpoint generation up to the current truncation point will cause the $\sqrt{n}$ budget on the number of distinct vertices in the partial walk prefix to be exceeded. 
Accordingly, $\sfM$ can adjust its guess of truncation point and this binary search process continues
until $\sfM$ finds the largest truncation point such that the new partial walk up to that truncation point contains $\sqrt{n}$ distinct vertices.  Note that by choosing $\ell = \Tilde{\Theta}(n^3)$, we can ensure that such a truncation point will exist with high probability, since the entire walk will visit all $n$ vertices.

\paragraph{Sampling Random Perfect Matchings.}
Now, we focus on the second challenge.  Recall that the identity of the newly generated midpoints and \textit{their ordering along the walk} is held in a distributed fashion by the machines $\sfM_{p,q}$.  Unfortunately, this combined information (contained in all the $\Pi_{p,q}$'s) is prohibitively large to be communicated back to the leader machine $\sfM$.  We show that it is possible to compress this data and send it to $\sfM$, so that $\sfM$ is able to re-sample the walk \textit{locally}.  Note that this local sampling will not recover the exact walk defined by the midpoint sequences $\Pi_{p,q}$. However, it will generate a walk with the correct probability that uses the same midpoints as the one defined by the $\Pi_{p,q}$'s.  
We use two key ideas. First, $\sfM$ only collects the multiset, $\mathbb{M}$, of the midpoints, without any information about their position in the walk.  Second, using $\mathbb{M}$, $\sfM$ can locally sample a weighted perfect matching to resample the locations of the midpoints.
It is clear that the first idea allows us 
to save on communication bandwidth, especially because there are $\sqrt{n}$ distinct vertices in each walk segment constructed in a phase.  Using perfect matchings to resample a walk is a new idea and we further expand upon how $\sfM$ executes the placement process of collected midpoints. 

Machine $\sfM$ locally constructs a complete bipartite graph $\mathcal{B}$, where the vertex set on one side of the bipartition is $\mathbb{M}$ (the sampled midpoints) and the vertex set on the other side, which we denote $\mathbb{P}$, is the set of midpoint positions in the walk\footnote{The precise complete bipartite graph that we construct is slightly different and is defined in Section \ref{subsubsec: truncated}. Specifically, for technical reasons, care has to be taken with the chronologically final midpoint in the walk  (see Lemma \ref{couplingLemma}). So the final midpoint and the final midpoint position are excluded from the eventual complete bipartite graph we use. We gloss over these details until Section \ref{subsubsec: truncated}.}.
Note that a perfect matching $\mu$ of $\mathcal{B}$ corresponds to a valid assignment of midpoints in $\mathbb{M}$ to each of the midpoint positions in the partial walk.
In the example in Figure \ref{fig:midPointPlacement}, since the midpoint multiset $\mathbb{M} = \{1, 2, 3, 3, 3, 3, 4, 4\}$ and these midpoints are to be placed in 8 positions, the (unweighted) graph $\mathcal{B} = K_{8, 8}$.
To each edge $(x, y)$ in $\mathcal{B}$, machine \sfM{} assigns a weight proportional to the probability of the sampled midpoint $x \in \mathbb{M}$ appearing as a midpoint for the start-end pair corresponding to midpoint position $y$.  Note that different midpoint positions could correspond to the same start-end pair.  There are at most $O(\sqrt{n})$ distinct vertices that are either already placed in the walk or sampled as new midpoints.  Since the edge weights of $\mathcal{B}$ depend only on transition probabilities between these vertices, to construct $\mathcal{B}$, $\sfM$ only needs to collect a $\sqrt{n}\times\sqrt{n}$ submatrix of a power of the transition matrix.  This can be done in $O(1)$ rounds.

Once the edge-weighted, complete bipartite graph $\mathcal{B}$ is defined, we define the weight of a perfect matching $\mu$ of $\mathcal{B}$ as the product of the weights of all of the edges in $\mu$. 
The motivation for all of these definitions is the connection we show in Lemma \ref{matchingLemma}: \textit{the weight of a perfect matching $\mu$ is proportional to the probability of independently sampling the assignment of midpoints in $\mathbb{M}$ to each of the midpoint positions in $\mathbb{P}$, specified by $\mu$.}
So to sample such an assignment of midpoints to midpoint positions, we just need an algorithm 
for sampling a random perfect matching $\mu$ of $\mathcal{B}$ with probability proportional to the weight of $\mu$.  We show that the leader machine $\sfM$ can do this sampling approximately in polynomial time by appealing to the classical algorithm of Jerrum, Sinclair, and Vigoda on approximating the permanent of a matrix\cite{Permanent}.  Since the permanent can be used to compute the weighted sum of perfect matchings of a weighted bipartite graph, we can then use the algorithm of Jerrum, Valiant, and Vazirani \cite{JVV} to reduce approximate sampling to approximate counting in polynomial time.

\begin{figure}[h]
    \centering
    \includegraphics[width=.95\textwidth]{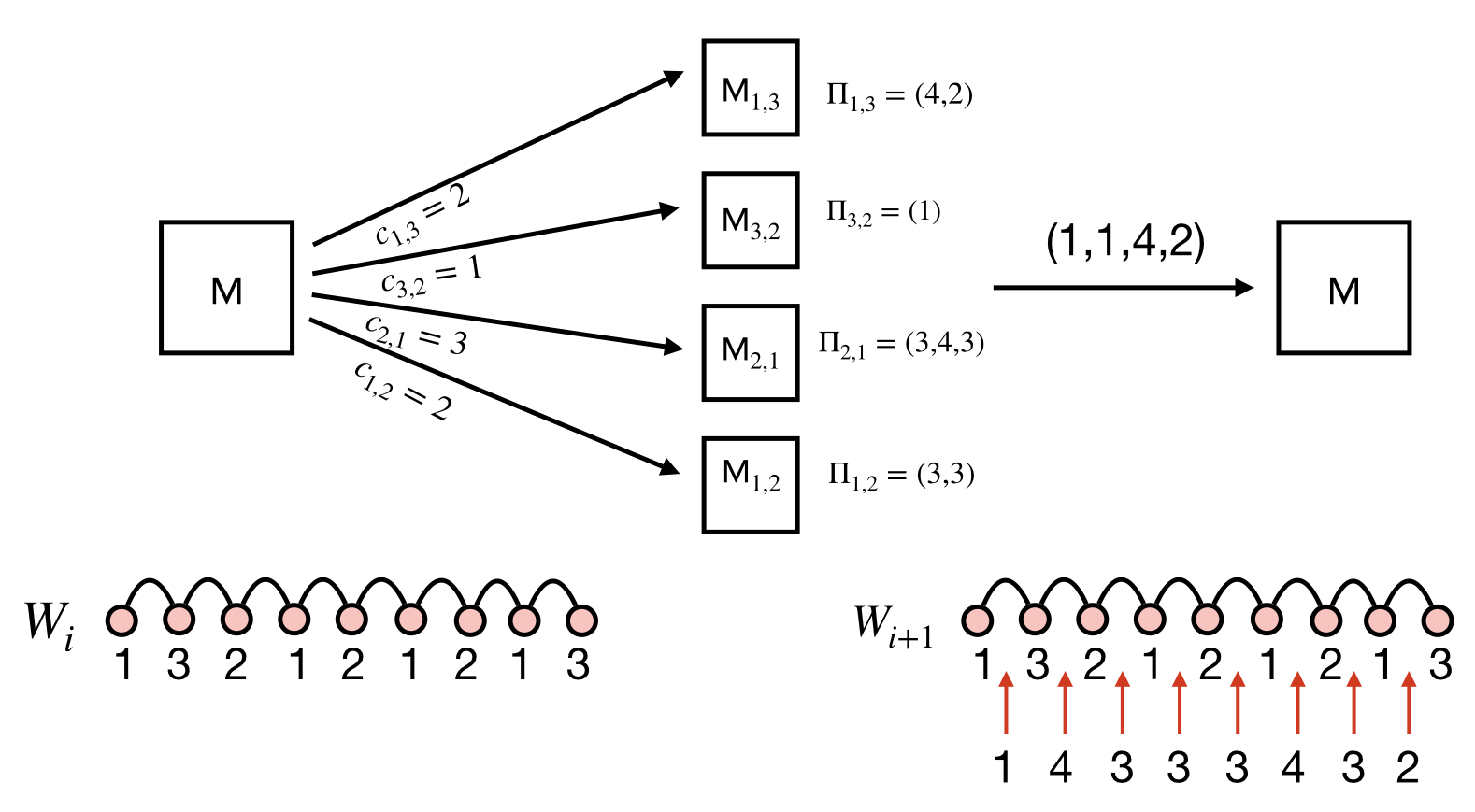}
    \caption{This figure illustrates how machine \sfM{} adds midpoints to the walk $W_i$ to obtain $W_{i+1}$.  Note that in $W_i$ there exist the distinct start-end pairs: $(1,3),(3,2),(2,1),(1,2)$.  \sfM{} sends the count $c_{p,q}$ of each start-end pair $(p, q)$ to the machine $\sfM_{p,q}$ responsible for that pair.  The machine $\sfM_{p,q}$ then generates a sequence $\Pi_{p,q}$ containing $c_{p,q}$ midpoints. For example, machine $\sfM_{1, 3}$ generates $\Pi_{1,3} = (4, 2)$ indicating that midpoint 4 is to be inserted within the first $(1, 3)$ and midpoint 2 is to be inserted within the second $(1, 3)$. Collectively, the machines $\sfM_{p,q}$ send the multiset $\mathbb{M}= \{1, 2, 3, 3, 3, 3, 4, 4\}$ (written in short, as the vector $(1, 1, 4, 2)$) of generated midpoints to \sfM.  Finally, \sfM{} samples a perfect matching between the sampled midpoints and midpoint positions in the walk, shown with red arrows.  The midpoints are then placed in the walk in these selected indices.  Here we ignore the subtlety of ensuring that $W_{i+1}$ contains at most $O(\sqrt{n})$ distinct vertices and also the care that needs to be taken around the final midpoint.}
    \label{fig:midPointPlacement}
\end{figure}

\paragraph{Using Schur Complement and Shortcut Graphs.}
The final piece of our algorithm is the short-cutting process. Recall that the idea is to skip over vertices in each phase that were visited in prior phases.  It is evident that short-cutting applies only \textit{after} the first phase. 
We use two types of graphs for this -- the \emph{Schur 
complement} graph and \emph{shortcut graph}. 

Consider the situation just before a phase and let $S$ be the set of vertices in the input graph $G$ that have not yet been visited along with the last vertex visited in the previous phase. The \emph{Schur complement} of $G$, denoted by $\schur(G,S)$, is an edge-weighted  undirected graph, with vertex set $S$ such that, roughly speaking, taking a random walk on $\schur(G,S)$ yields the same distribution as
taking a random walk on $G$ and then removing the vertices in $V \setminus S$.  For all phases except the initial phase, we replace $\adj$ with $\scc$, the transition matrix for a random walk on $\schur(G,S)$.  This allows us to take a random walk on $G$ while skipping over vertices in $V \setminus S$.

Problematically, the walk on $\schur(G, S)$ returns edges from $\schur(G, S)$. For our spanning tree sampling, we need the first visit edges in the original graph $G$. For this, using ideas from~\cite{MK}, we need another derivative graph on $G$, that we call the \emph{shortcut graph}. Interestingly, we show that using fast distributed matrix multiplication algorithms, both of these graphs can be computed efficiently in each phase.  

Note that while the cover time of an arbitrary weighted graph can be large, the cover time of $\schur(G,S)$ is always bounded by the cover time of $G$. Thus even beyond the first phase we can set $\ell = \Tilde{\Theta}(n^3)$ and still visit at least $\Theta(\sqrt{n})$ distinct vertices in $\schur(G,S)$ with high probability.
The outline below describes an arbitrary phase of our final algorithm.

\newpage $ $
\begin{tcolorbox}[title=Outline 3.  
Our final algorithm,colback=white,colframe=black]
\label{box:dist2}

  \textbf{One phase of final \cc{} algorithm}
  \begin{enumerate}
        \item $W[0] = s:=$ final vertex visited in the previous phase 
        \item $S:=$ Set of unvisited vertices $\cup~\{s\}$;~~ $\scc:=$ transition matrix of $\schur(G,S)$
        \item $\ell := \Tilde{\Theta}(n^3)$
        \item Compute $\scc^2,\scc^4, ..., \scc^\ell$ and sample $W[\ell]$ from $\scc^\ell[s,*]$.

        \item For level $i = 1$ to $\log \ell$
        \begin{enumerate}
            \item  \textbf{Leader }$\sfM$ \textbf{requests midpoints} (cf. Outline 2 step 5(a)-(c)) from designated machines. 
            \item Machines execute \textbf{midpoint sampling} (cf. Outline 2 step 5(d)) in response to $\sfM$'s request.
            \item $\sfM$ uses binary search to find the correct truncation point and truncates $W$
            \item $\sfM$ receives the multiset $\mathbb{M}$ of used midpoints
            \item $\sfM$ samples a random weighted perfect matching between $\mathbb{M}$ and midpoint positions
            \item $\sfM$ places midpoints according to the sampled matching
        \end{enumerate}
        \item The first visit edge to each newly visited vertex is sampled using \shortcut$(G,S)$.
  \end{enumerate}
\end{tcolorbox}

\subsection{Barriers to Further Improvements}
Non-trivial lower bounds in the \cc{} model are extremely difficult to obtain because of connections to long-standing open problems in circuit complexity \cite{DruckerKuhnOshmanPODC2014}.
As a result, it is difficult to get insight into how close to optimal our algorithm might be.
However, there seem to be significant challenges to improving either the more traditional bottom-up doubling approach or our top-down walk filling approach for sampling random spanning trees in the \cc{} model.  We discuss several potential directions for improvement and the corresponding challenges below.
It is also worth noting that many candidate algorithms do not sample spanning trees from the uniform distribution. For example, given that an MST can be constructed in $O(1)$ rounds in the \cc{} model, one might be tempted to randomly assign weights, say in $[0, 1]$, to edges and construct the MST on this edge-weighted graph. However, the distribution of the spanning tree constructed in this manner is well known to differ from the uniform distribution \cite{Goldschmidt_2018}.

\begin{description}
\item[Direction 1:] \textbf{Reduce the total running time of our top-down approach from $\tilde{O}(n^{1/2+\alpha})$ to $\tilde{O}(n^{1/2})$ rounds.}  
Recall that the $\tilde{O}(n^\alpha)$ factor in the running time arises from having to do matrix multiplication in each of the $O(n^{1/2})$ phases of the algorithm. 
We use matrix multiplication to compute the shortcut and Schur complement graphs at the end of each phase.
Getting rid of the $\tilde{O}(n^\alpha)$ factor (while keeping the rest of our approach the same) amounts to computing the shortcut and Schur complement graphs in less than matrix multiplication time.
This seems like an interesting problem in its own right and it might be approachable using
distributed Laplacian solvers \cite{CCLaplacianSolver}.  
In fact, Kelner and Madry \cite{MK} demonstrate that we can use Laplacian solvers to quickly compute shortcut probabilities.
However, there still seem to be two problems.  First, their result is only efficient when they limit themselves to computing shortcut probabilities between clusters of a graph where each cluster has only a few outgoing edges.  In our algorithm there is one large cluster with no bound on the outgoing number of edges and the remaining clusters contain single vertices.  Without a bound on inter-cluster edges, a solver would have to be able to solve many systems in parallel.  Second, even if this problem could be solved, we also need to compute several powers of the transition matrices of both the Schur complement and shortcut graphs.  Let $\adj$ be the transition matrix for a random walk on $G$ and $\scc$ be the same for $\schur(G,S)$.  In particular, there does not appear to be an easy way (i.e., taking less than matrix multiplication time) to compute $\scc^k$ from $\adj^k$.  
If such an efficient algorithm were available, we could have computed powers of $\adj$ at the beginning of the algorithm and then computed corresponding powers of $\scc$ more efficiently at the start of each subsequent phase.
Note that $\schur(G,S)^k \neq \schur(G^k, S)$, where here the power of a graph refers to the weighted graph given by the $k$-th power of its adjacency matrix.

\item[Direction 2:] \textbf{Reduce the total running time of our top-down approach from $\tilde{O}(n^{1/2+\alpha})$ to $\tilde{O}(n^\alpha)$ rounds.}  
In the sequential setting, spanning trees can be generated in matrix multiplication time \cite{MMST1, MMST2} (though these results assume matrix multiplication time is $\Omega(n^2)$).
We might be able to achieve this running time in the \cc{} model, if we weren't limited to only visiting $O(\sqrt{n})$ distinct vertices in each phase. 
Keeping other aspects of our algorithm the same, visiting more than $\omega(\sqrt{n})$ distinct vertices in a phase, implies the need to generate walk midpoints for $\omega(n)$ distinct start-end pairs.
The \cc{} model seems to fundamentally lack the bandwidth to efficiently generate such a large number of midpoints.

\item[Direction 3:] \textbf{Implement the bottom-up doubling approach efficiently.}
We motivate our top-down approach by identifying fundamental bottlenecks that the bottom-up doubling 
(as in \cite{BCX,LackiMOSSTOC2020}) runs into.
We elaborate on this point a bit more here.
Clearly, it takes only $O(\log n)$ iterations of the doubling algorithm to build an $\tilde{O}(n^3)$ length walk. The challenge is implementing each doubling iteration efficiently.  For the version of the doubling approach considered in \cite{BCX}, there seems to be a communication barrier even in the very first doubling iteration. Initially each machine holds $\Theta(n^3)$ length-1 walks and needs to receive an additional $\Theta(n^3)$ length-1 walks to merge with the walks it holds. This is clearly much higher than the $O(n)$ words bandwidth per machine.  Furthermore, to maintain correctness, each walk also needs to be tagged with an index, so we can't compress the information by having each machine receive a multiset of walks (i.e., a set of walks with counts), rather than a list.  Even if we didn't need to maintain indices, as in the algorithm of \cite{LackiMOSSTOC2020} which uses more total length-1 walks than \cite{BCX}, we still seem to face a problem as the number of distinct $k$-length walks starting from a vertex grows exponentially in $k$. Therefore, even communicating a multiset would require much more than $O(n)$ words after a few iterations.

\item[Direction 4:] \textbf{Designing a conceptually simpler $o(n)$-round algorithm.}
Theorem \ref{loadBalancing} tells us that a length-$n$ random walk can be computed in $O(\log^2 n)$ rounds with
high probability. This raises the natural question of how many distinct vertices a length-$n$ random walk visits. It turns out that Barnes and Feige \cite{FB} answer precisely this question and show that a length-$n$ walk visits $\Omega(n^{1/3})$ vertices, raising the possibility that the graph might be covered in $O(n^{2/3})$ phases. Unfortunately, the Barnes-Feige bound is known to hold only for unweighted graphs. After the first phase, we need to work with the edge-weighted Schur complement graph, and for such graphs, no bound similar to the Barnes-Feige bound is known.
It is worth emphasizing that even if this direction were successful, it would lead to an algorithm
whose running time would be worse than $\tilde{O}(n^{1/2+\alpha})$. 
However, this approach is conceptually simpler than our current approach and might be more easily amenable to improvements.
\end{description}

\subsection{Other Related Work}
\label{subsected:relatedWork}
In addition to the related work already discussed, a few other results are noteworthy in the parallel and distributed settings.  Teng \cite{PRAMWalk} showed that in the EREW PRAM model we can take a length $\Theta(n^3)$ walk, and hence sample a spanning tree, in $O(\log n)$ rounds.  Anari, Hu, Saberi, and Schild showed that for weighted spanning trees the problem is still in RNC \cite{ParWeightedSpanningTree}.  In the much weaker \congest{} model, Sarma, Nanongkai, Pandurangan, and Tetali \cite{CongestRandomWalk} show that sampling a spanning tree is possible in $\Tilde{O}(\sqrt{m} \cdot D)$ rounds, where $D$ is the diameter of the graph.  In this model, machines can only send limited bandwidth messages to their neighbors, rather than to every machine in the network.  Finally, there is a distributed algorithm for estimating the number of spanning trees by Lyons and Gharan \cite{STCounting}.  Given the well-known reduction from sampling problems to counting problems \cite{JVV}, this may be useful for sampling spanning trees.  However, in the case of spanning trees, it is not clear how to parallelize this reduction.

In addition to the doubling results mentioned earlier, one other doubling result of interest is by Luo \cite{RadarPush} and extended into a journal version by Luo, Wu, and Kao \cite{LuoWuKaoInfoSciences2022}.  However, the result is focused only on sampling the endpoints of random walks, for PageRank, instead of generating the entire walk.  Also, the algorithm requires a higher communication bandwidth than our version of the \cc{} model allows and doesn't seem to work with long walks.

Another interesting approach to randomly sampling spanning trees is to first randomly sparsify the input graph and then draw a random spanning tree from the sparsified graph.  Such an approach in the sequential setting is described by Durfee, Peebles, Peng, and Rao \cite{SparsifierSampler}.  However, their approach only allows random spanning tree sampling with constant (or slightly sub-constant) total variation distance, instead of inverse polynomial total variation distance from the uniform distribution.  Furthermore, independent of the issue with total variation distance, it is not clear whether their sparsification algorithm can be implemented efficiently in the \cc{} model.

\subsection{Congested Clique Model}
\label{model}
Given an $n$-vertex input graph $G = (V, E)$, the underlying communication network is the size-$n$ clique $K_n$, along with a bijection between vertices in $G$ and the machines in the communication network. Thus each machine hosts a distinct vertex and all edges in $G$ incident to that vertex.  

Machines have unique IDs of length $O(\log n)$ bits and without loss of generality for our algorithms, we say the machines have IDs $1$ through $n$ and the machine with ID $i$ holds vertex $i$.
Computation in the model proceeds in synchronous rounds.  Each round begins with each machine performing unbounded local computation.  A round then ends with each machine sending a (possibly different) message of length $O(\log n)$ bits to every other machine.  Note that a single message is able to encode a constant number of vertices or edges.
The time complexity of an algorithm is then measured by the number of rounds used.  While machines are theoretically allotted unlimited time for local computation, it is preferred that this time is polynomial in terms of the input graph size.  Our algorithms only require polynomial time local computation.

Lenzen \cite{Lenzen13} proved that in $O(1)$ deterministic rounds it is possible for every vertex to send and receive a total of $O(n)$ messages, regardless of the specific destination of each message.  Thus we take the more general view in this model that at each round a machine is able to send and receive a total of $O(n)$ messages instead of placing limits on bandwidth between pairs of machines.

Of particular interest to us is matrix multiplication in the \cc{}.  When $n \times n$ square matrices are distributed among machines so that machine $i$ holds row $i$ in each matrix, Censor-Hillel, Kaski, Korhonen, Lenzen, Paz, and Suomela showed that matrix multiplication can be carried out in $O(n^{\alpha})$ rounds \cite{MatrixMult} (currently $\alpha = .157$) \cite{NewMatrixMult}.

\subsection{Schur Complement and Shortcut Graphs}
We now introduce two derivative graphs of the original graph $G$, called the \textit{Schur complement} graph and the \textit{Shortcut} graph.
As mentioned earlier, the Schur complement graph will be used to skip over vertices already visited by the random walk constructed in prior phases. 
The shortcut graph, which is closely related to the Schur complement graph, is used in the setting where we have taken a walk on the Schur complement graph, but we wish to recover the first visit edge of a vertex in the underlying walk in $G$.

Schur complement is a notion from linear algebra that arises during Gaussian elimination. 
We consider an $n \times n$ matrix $M$ and a subset $S \subset [n]$ of the index set. Let $\overline{S}$ denote $[n] \setminus S$ and, without loss of generality, take $\overline{S}$ to be the first $|\overline{S}|$ indices in $[n]$. Borrowing notation from Section 2.3.3 in \cite{Kyng_2017}, we can rewrite $M$ as
$$M=\begin{bmatrix}M_{\overline{S},\overline{S}}&M_{\overline{S},S}\\M_{S,\overline{S}}&M_{S,S}\end{bmatrix}.$$
When $M_{\overline{S},\overline{S}}$ is invertible, we define the \textit{Schur complement of $M$ onto $S$} as
$$\schur(M, S) = M_{SS} - M_{S,\overline{S}} \cdot (M_{\overline{S},\overline{S}})^{-1} \cdot M_{\overline{S},S}.$$

This definition can be ported to graphs by noting that the Schur complement operation is closed for the class of graph Laplacians. Recall that the \textit{Laplacian} $L(G)$ of a simple, undirected, weighted graph $G = (V, E, w)$ with $V = [n]$ is a $n \times n$ matrix with entries 
$$
L(G)[i, j] =
\begin{cases}
\sum_{e: i \in e} w(e), & \text{if } i = j\\
-w(e), & \text{if } i \not= j \text{ and } e = \{i, j\} \in E\\
0, & \text{otherwise}
\end{cases}
$$
According to Fact 2.3.6 in \cite{Kyng_2017}, for any simple, undirected, weighted graph $G = (V, E, w)$ and vertex subset $S \subset V$, $\schur(L(G), S)$ is itself a Laplacian of a simple, undirected, weighted graph defined on the subset of vertices $S$. 

\begin{definition}[Schur Complement] 
For any simple, undirected, weighted graph $G = (V, E, w)$ and vertex subset $S \subset V$, the \textit{Schur complement graph of $G$ onto $S$}, denoted $\schur(G, S)$, is the simple, undirected, weighted graph $H$ with vertex set $S$ such that $L(H) = \schur(L(G), S)$.
\end{definition}
The motivation for defining $\schur(G, S)$ is that
taking a random walk in $G$ starting at $v \in S$ and only looking at the subsequence of visits to vertices in $S$ is identical to taking a random walk in \schur$(G,S)$ starting from the vertex $v \in S$.  For a weighted graph, a random walk refers to choosing the edge leaving a vertex proportional to that edges weight.  More precisely, as shown in Theorem 2.4 in \cite{shortcutAS}, the distributions over the sequences of vertices in $S$ generated by the two random walks are identical.  It is also possible to define the graph $\schur(G, S)$ implicitly (and more intuitively) by specifying the transition matrix, $\scc$, of a random walk on the graph.

\begin{definition}[Schur Complement Transition Matrix]
    For any $u, v \in S$, $\scc[u,v]$ gives the probability that $v$ is the first vertex in $S \setminus \{u\}$ that a random walk started at $u$ in $G$ visits.
\end{definition}

Figure \ref{derivativeGraphFigure} provides an illustration of the definition of $\schur(G, S)$.

We now define a second derivative graph, closely related to the Schur complement graph, that we call the \textit{shortcut} graph.
The shortcut graph is used in the setting where we have taken a walk on $\schur(G, S)$, but we wish to recover the first visit edge of a vertex in the underlying walk on $G$.  It is convenient to define the shortcut graph implicitly by specifying its transition matrix.

\begin{definition}[Shortcut Graph] The shortcut graph, \shortcut$(G, S)$, is a weighted directed graph with vertex set $V$ and transition matrix $\sct$ (for the random walk on \shortcut$(G,S)$) defined as follows.  Consider a random walk on $G$ starting at a vertex $u \in V$: $x_0 = u, x_1, x_2, ...$.  Let $j = \min \{i > 0 | x_i \in S\}$ be the index of the first vertex in the walk, after $x_0$, belonging to $S$.  Then $\sct[u,v] = \Pr[x_{j-1} = v]$ is the probability that $v$ appears just before $x_j$ in this walk. 
\end{definition}

Suppose that we have a single random transition in $\schur(G, S)$ from vertex $u \in S$ to vertex $w \in S$. Note that this transition corresponds to a random walk from $u$ to $w$ in $G$.  Suppose we wish to recover the edge $(v, w)$ used to visit $w$ in this underlying walk on $G$.
The shortcut graph provides a probability distribution over neighbors of $w$, that can be used with Bayes' rule to sample
$v$. Specifically, $\sct[u, v] \cdot \frac{1}{\deg_S(v)}$ provides an (unnormalized) distribution over neighbors of $w$ for correctly sampling the edge $(v, w)$. This will be explained in detail in Section \ref{futurePhases}.

\begin{figure}[h]
    \centering
    \includegraphics[scale = .75]{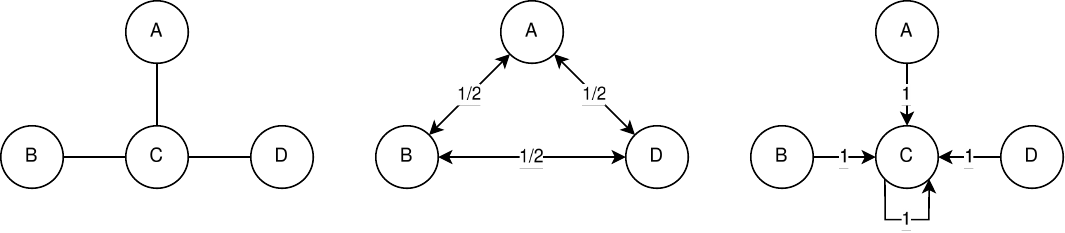}
    \caption{This figure illustrates both derivative graphs of $G$.  On the left is the original graph $G$.  In this example $S = \{A,B,D\}$.  In the center is \schur$(G,S)$.  Finally, on the right, is \shortcut$(G,S)$.  The labels on the edges give the transition probabilities. Note that the Schur complement graph contains uniform transitions between every vertex.  This is because a random walk started at $A$ (for instance) is equally likely to visit $B$ before $D$ or vice versa.  In the shortcut graph every vertex always transitions to $C$ since $C$ is always visited directly before a visit to a vertex in $S$ (except possibly at time 0).}
    \label{derivativeGraphFigure}
\end{figure}

\subsection{Sampling Weighted Perfect Matchings}
We now describe how sampling a weighted perfect matching of a bipartite graph can be achieved in polynomial time.  Recall we define the weight of a perfect matching as the product of the weight of its edges.  The sum of the weight of every perfect matching is given by the permanent of the biadjacency matrix of the bipartite graph.  Jerrum, Sinclair, and Vigoda gave an FPRAS for approximating the permanent \cite{Permanent}.  Further, using the sampling to counting reduction of Jerrum, Valiant, and Vazirani \cite{JVV}, we can then approximately sample perfect matchings proportional to their weight.  In particular, sampling within total variation $\delta$ can be done with a runtime that is polynomial in both the size of the bipartite graph and $\ln(\frac{1}{\delta})$.

\section{Sampling Spanning Trees in Sublinear Rounds}\label{genAlg}
Our algorithm will proceed in phases.  In each phase (except possibly the final phase), with high probability, an additional $\Theta(\sqrt{n})$ distinct vertices will compute their first visit edges in the random walk.  
This leads to a total of $O(\sqrt{n})$ phases being required to generate a uniform spanning tree.  
We spend (roughly) matrix multiplication time per phase, i.e., $\tilde{O}(n^{\alpha})$ rounds, for a total running time of $\Tilde{O}(n^{1/2 + \alpha})$ rounds.
For simplicity, we will begin by focusing on the first phase first; this phase works with the underlying graph $G$. Subsequent phases will be described separately because they work with the Schur complement graph and shortcut graph.  For now, we will also assume that every operation can be carried out with exact numerical precision.  We relax this requirement in Sections \ref{schurComplement} and \ref{numericalPrecision} and show that each operation can still be carried out with the requisite precision in the \cc.

\subsection{Phase 1}
Let $\rho := \lfloor \sqrt{n}\rfloor$.  We will build a random walk that terminates with high probability at the time $T$ where the walk first visits the $\rho$-th \textit{distinct} vertex.  
Using the fact that the cover time of any unweighted undirected graph is $O(n^3$) and Markov's inequality, we see that an $O(n^3)$ length random walk covers the graph with probability at least $\frac{1}{2}$.  Thus, $T = \Tilde{O}(n^3)$ with high probability.\footnote{Technically, by a result of Barnes and Feige \cite{FB}, in the first phase a walk only needs to be of expected length $O(n^{1.5})$ to visit $\sqrt{n}$ distinct vertices. However, this bound only applies to undirected, unweighted graphs. 
In subsequent phases, we consider walks on the Schur complement graph, which is weighted.
The Barnes and Feige result does not hold for weighted graphs. 
However, the $O(n^3)$ upper bound on the cover time remains a valid bound in later phases as well because the weighted graphs we work with are obtained by taking the Schur complement of the input (unweighted) graph.}
Initially, we begin generating a walk of \textit{target length} $\ell := \Tilde{\Theta}(n^3)$.  Specifically, we choose $\ell$ to be the smallest power of two at least $\log\big(\frac{4\sqrt{n}}{\epsilon}\big)n^3$.  We choose this length so that $\ell \geq T$ in every phase with probability at least $\frac{\epsilon}{2}$ by the union bound and because powers of 2 are convenient.  We will truncate this walk as needed so that the walk ends at time $T$.

\subsubsection{Sequential Random Walk Algorithm}
For clarity, we will start by describing a sequential version of the algorithm -- which contains some key ideas of our final algorithm -- and then show how to implement this efficiently in the \cc. 
The phase then begins with an Initialization Step, where the transition matrix of the random walk on the graph, $\adj$, as well as the matrix powers $\adj^2, \adj^4, \adj^8, ..., \adj^\ell$ are computed.

We now describe the ``top down'' walk filling process.
The initial partial walk $W_1$ is generated by first choosing an arbitrary start vertex $W[0]$.  The end vertex, $W[\ell]$, is sampled with the correct probability, conditioned on $W[0]$ being the start vertex. In particular, the distribution for $W[\ell]$ is given by $\adj^{\ell}[W[0],*]$.
The walk is then iteratively filled level-by-level in $O(\log \ell)$ \textit{levels}.  

Suppose the levels are labeled $1, 2, \ldots, \log_2 \ell$ and let $W_i$ denote the partial walk at the beginning of level $i$. Note that $W_i$ is a random walk on the graph with transition matrix $\adj^{\ell/2^{i-1}}$.
During level $i$, we construct the partial walk $W_{i+1}$ from $W_i$, 
by considering each pair of consecutive vertices in $W_i$ in chronological order, and
placing a vertex, which we will call a \textit{midpoint}, between this pair.
For example, at the beginning of level $1$, the initial partial walk is $W_1 = (W[0], W[\ell])$. During level $1$, the midpoint $W[\ell/2]$ is sampled and inserting this leads to the partial walk $W_2 = (W[0], W[\ell/2], W[\ell])$.
During any level, as described earlier in Section \ref{formulaPar}, to choose a midpoint between consecutive vertices $W[p]$ and $W[q]$, we sample a vertex $w$ with probability proportional to
\begin{equation}\label{midpointFormula}
    \adj^{(q-p)/2}[W[p], w] \cdot \adj^{(q-p)/2}[w, W[q]]
\end{equation}
Finally, we return $W_{\log_2(\ell) + 1}$.

\begin{lemma}
This sequential algorithm correctly samples a random walk of length $\ell$ beginning at $W[0]$.
\end{lemma}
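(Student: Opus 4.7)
The plan is to prove by induction on the level index $i$ the invariant that, at the start of level $i$, the partial walk
$$W_i = (w_0, w_{\ell/2^{i-1}}, w_{2\ell/2^{i-1}}, \ldots, w_{\ell})$$
has the same joint distribution as the values at times $0, \ell/2^{i-1}, 2\ell/2^{i-1}, \ldots, \ell$ of a true random walk on $G$ started at $w_0$. Once the invariant holds at $i = \log_2 \ell + 1$, every coordinate $w_0, w_1, \ldots, w_\ell$ is filled in and the returned walk has exactly the target distribution.

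For the base case, $W_1 = (w_0, w_\ell)$ where $w_\ell$ is drawn from the row $\adj^\ell[w_0, \cdot]$. Since the $\ell$-step transition probability of a random walk on $G$ from $w_0$ to any vertex $v$ is precisely $\adj^\ell[w_0, v]$, this matches the true joint law at times $0$ and $\ell$.

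For the inductive step, I would first fix the partial walk $W_i = (u_0, u_1, \ldots, u_{2^{i-1}})$ (where $u_j = w_{j \cdot \ell/2^{i-1}}$) and argue two separate facts. First, conditional on $W_i$, the midpoints that must be inserted between the consecutive pairs $(u_j, u_{j+1})$ are mutually independent under the true random-walk distribution. This is exactly the strong Markov property applied repeatedly: once the values at a set of times $t_0 < t_1 < \cdots < t_k$ are fixed, the segments of the walk strictly between these times are conditionally independent. Second, for a single pair $(u_j, u_{j+1})$ separated by $\delta := \ell/2^{i-1}$ steps, Bayes' rule gives
$$\Pr[w_{(t_j + t_{j+1})/2} = w \mid u_j, u_{j+1}] \;=\; \frac{\adj^{\delta/2}[u_j, w]\cdot \adj^{\delta/2}[w, u_{j+1}]}{\adj^{\delta}[u_j, u_{j+1}]},$$
whose numerator is exactly the sampling weight in equation~\eqref{midpointFormula} and whose denominator is a normalizing constant independent of $w$. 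Hence the sampled midpoint matches the correct conditional law. Combining these two facts shows that $W_{i+1}$, obtained by independently interleaving one such midpoint into each gap, has the joint distribution of the walk values at the refined time grid $\{0, \delta/2, \delta, 3\delta/2, \ldots, \ell\}$, completing the induction.

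I do not expect any serious obstacle: the only subtlety is being precise about conditional independence of the midpoints, since the pseudocode draws them one at a time but the distributional argument needs them to be independent given $W_i$. The Markov property handles this cleanly because, conditional on the endpoints of a segment, what happens inside that segment is independent of everything outside it; so the order in which the midpoints are sampled is irrelevant to the resulting joint law. Everything else reduces to a single application of Bayes' rule per midpoint.
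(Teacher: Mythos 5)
Your proposal is correct and follows the same route as the paper: induction on the level, with the base case given by the definition of $\adj^\ell$ and the inductive step resting on the Markov property (which gives conditional independence of the gaps given $W_i$, and hence makes the sampling order of midpoints irrelevant) together with Bayes' rule for each midpoint's conditional law — the paper compresses exactly this into ``the chain rule of probability and the Markov property.'' The only cosmetic quibble is that you invoke the \emph{strong} Markov property, whereas the times here are deterministic, so the ordinary Markov property suffices.
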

\begin{proof}
    We can inductively show that $W_i$ is a correct random walk on the graph with transition matrix $\adj^{\ell/2^{i-1}}$.  The base case, $i=1$, is trivial.  The inductive step simply follows from the chain rule of probability and the Markov property of random walks.
\end{proof}

\subsubsection{Sequential Truncated Random Walk Algorithm}
Unfortunately, in the \cc{} model, we don't know how to efficiently simulate the previous algorithm once more than $\rho$ distinct vertices are contained in the random walk.  Note that with high probability a length $\ell$ walk will contain more than $\rho$ distinct vertices.  Instead, we let $\tau$ be the stopping time of the random walk on $G$ that is the minimum of $\ell$ and the time at the first occurrence of the $\rho$-th distinct vertex. Our goal is to sample random walks ending at time $\tau$.  The minimum of $\ell$ is included in the definition of $\tau$ to handle the low probability event where fewer than $\rho$ distinct vertices are contained in a random walk of length $\ell$.

We now augment the previous algorithm as follows.
Consider a level $i$, that starts with a partial walk $W_i$ of target length $\ell_i$. 
In our notation, the target length of a partial walk is given by the index of its final element.
Recall that $\ell_1$, the target length of $W_1$ (also denoted by $\ell$), is equal to $\Tilde{\Theta}(n^3)$.
In level $i$, we fill in midpoints in chronological order into the partial walk $W_i$ to construct the partial walk $W_{i+1}$.
As we insert a midpoint $w$ between consecutive vertices $W[p]$ and $W[q]$ in $W_i$, we check if the partial walk contains at least $\rho$ distinct vertices.  If so, we truncate the partial walk to end at the first occurrence of the $\rho$-th distinct vertex.
We call this truncated partial walk $W_{i+1}$, and the index of the 
first occurrence of the $\rho$-th distinct vertex is the new target length $\ell_{i+1}$. 
It is possible that no truncation occurs in a level and $W_{i+1}$ is obtained from $W_i$ after \textit{all} midpoints are filled in.  
In this case, $\ell_{i+1} = \ell_i$.
Again, we return $W_{\log_2(\ell) + 1}$.

\begin{lemma} The sequential truncated random walk algorithm samples a random walk beginning at $W[0]$ and ending at the stopping time $\tau$.
\end{lemma}
\begin{proof}
    By the Markov property of random walks, each truncation will have no impact on the probability of vertices placed before the truncation in future levels.  Thus, without changing the result returned by the algorithm, we could choose to defer all truncations of the walk to the end of the algorithm.  However, it is easy to see that this is simply generating a walk of length $\ell$ and truncating the walk to end at time $\tau$.
\end{proof}

\subsubsection{Truncated Walk in the Congested Clique}
\label{subsubsec: truncated}
Now we describe how to implement the sequential truncated random walk algorithm described above in the \cc{} model.

\medskip

\noindent
\textbf{Initialization Step.}
We begin with the same Initialization Step as in the sequential algorithm; see the pseudocode below (Algorithm \ref{markov-alg-preproc}). 
The purpose of the Initialization Step is twofold: (i) for each Machine $i$ to hold row $i$ and column $i$ in each of transition matrices $\adj, \adj^2, \adj^4, \ldots, \adj^\ell$ and (ii) to construct the initial partial walk subsequence $W_1 = (W[0], W[\ell])$.

\begin{algorithm}[H]
\caption{Initialization Step}
\label{markov-alg-preproc}
\begin{algorithmic}[1]
\State Machine $1$ is designated as the ``leader'' machine $\sfM$; $\sfM$ designates the vertex it is hosting (vertex 1) as $W[0]$, the start of the walk.
\State Using the \cc{} matrix multiplication algorithm from \cite{MatrixMult}, every Machine $i$ computes rows $\adj[i, *] ,\adj^2[i, *],\adj^4[i, *], \ldots,\adj^\ell[i, *]$.
\State Every Machine $i$ sends $\adj^k[i,j]$ to machine $j$, for all $j, k$.
\State  $\sfM$ samples $W[\ell]$ from the distribution given by $\adj^\ell[1,*]$.
\end{algorithmic}
\end{algorithm}
\noindent
We now describe level $i$ of the midpoint filling in process. 
We start with the inductive assumption that $\sfM$ holds partial walk $W_i$ of target length $\ell_i$ containing at most $\rho$ distinct vertices.

\medskip

\noindent
\textbf{Midpoint Requests.} 
$\sfM$ needs to generate midpoints between consecutive pairs of vertices in $W_i$.  By Formula \ref{midpointFormula}, the probability distribution of a midpoint vertex only depends on the identities of the vertices it is being placed between and the length of the walk between these vertices.  Thus, in a given level, all midpoints being placed between the same \textit{start} and \textit{end} vertices are drawn from the same probability distribution.  
Since $W_i$ contains at most $\sqrt{n}$ distinct vertices, $W_i$ can contain at most $n$ distinct (start, end) pairs.
$\sfM$ will designate one machines for each (start, end) pair that $\sfM$ needs to sample midpoints between.  Specifically, let the machine assigned the (start, end) pair $(p, q)$ be denoted $\sfM_{p,q}$. Furthermore, let $c_{p,q}$ denote the number of occurrences of the (start, end) pairs $(p, q)$ as a pair of consecutive vertices in $W_i$. 
The ``leader'' machine $\sfM$, requests $c_{p, q}$ midpoints from machine $\sfM_{p, q}$.

\medskip

\noindent
\textbf{Midpoint Generation.} We now consider a machine $\sfM_{p,q}$ receiving a request for $c_{p,q}$ midpoints. The first task of machine $\sfM_{p,q}$ is to acquire the distribution from which to sample the midpoints of (start, end) pair $(p, q)$.
Recall that consecutive vertices in $W_i$ are at distance $\ell/2^{i-1}$ in the underlying graph $G$; for convenience let $\delta$ denote $\ell/2^{i-1}$. For each vertex $j$, machine $\sfM_{p,q}$ sends a request to Machine $j$ (i.e., the machine hosting vertex $j$), requesting the (unnormalized) probability that vertex $j$ is the midpoint of a length $\delta$ walk between $p$ and $q$. Note that by Formula (\ref{midpointFormula}) this probability is just $\adj^{\delta/2}[p,j] \cdot \adj^{\delta/2}[j,q]$.  Due to the Initialization Step (Steps 2 and 3), Machine $j$ holds both $\adj^{\delta/2}[p,j]$ and $\adj^{\delta/2}[j,q]$ for all (start, end) pairs $(p, q)$ and can compute this product.  $\sfM_{p,q}$ then uses the implicit normalized probability distribution $\left(\adj^{\delta/2}[p,j] \cdot \adj^{\delta/2}[j,q]\right)_{j=1}^n$ it received from every other machine to independently sample a sequence of midpoints $\Pi_{p, q} = \pi_1, \ldots, \pi_{c_{p,q}}$, where $\pi_i$ is the midpoint intended to be inserted between the $i$th occurrence of the (start, end) pair $(p, q)$ in $W_i$. 

\begin{algorithm}[H]
\caption{Midpoint Requests and Generation}
\label{markov-alg-main}
\begin{algorithmic}[1]
\State Machine $\sfM$ holds the current partial walk $W_i$.
\State $\sfM$ counts the number of distinct consecutive pairs in $W_i$ and the number of occurrences $c_{p,q}$ of each pair $(p, q)$.
\State $\sfM$ assigns one machine, $\sfM_{p,q}$, to each (start, end) pair $(p, q)$ in $W_i$ and sends $\sfM_{p, q}$ the corresponding count $c_{p, q}$.
\State $\sfM_{p, q}$ requests and receives the probability $\adj^{\delta/2}[p,j]\cdot \adj^{\delta/2}[j, q]$ from each machine $j$ hosting vertex $j$.
\State Machine $\sfM_{p,q}$ samples the midpoint sequence $\Pi_{p,q} = \pi_1, \ldots ,\pi_{c_{p, q}}$, where each $\pi_k$ is sampled according to the (unnormalized) distribution $\left(\adj^{\delta/2}[p,j]\cdot \adj^{\delta/2}[j, q]\right)_{j=1}^n$ that machine $\sfM_{p,q}$ received.
\end{algorithmic}
\end{algorithm} 

\medskip

\noindent
\textbf{Distributed binary search for truncation point.} Finally, $\sfM$ needs to fill in the midpoints in level $i$, thereby constructing partial walk $W_{i+1}$ from $W_i$.  The challenge with this step is that the sequences $\Pi_{p, q}$ could be quite large, in fact even as large as $\Tilde{\Theta}(n^3)$! 
But, let us imagine for the moment that $\sfM$ ``magically'' receives $\Pi_{p, q}$ from each machine $\sfM_{p, q}$ and fills in the midpoints of each (start, end) pair $(p, q)$ in $W_i$ in the same order as specified by the sequence $\Pi_{p, q}$.  To faithfully execute the sequential truncated algorithm, we would obtain $W_{i+1}$ by truncating this filled in partial walk to end at index $\ell_{i+1}$.  Recall that $\ell_{i+1}$ is the first index of the $\rho$-th distinct vertex if it exists or otherwise $\ell_i$.

Returning to the algorithm, $\sfM$ is not immediately able to compute $\ell_{i+1}$ because it does not know the sequences $\Pi_{p,q}$. However, $\ell_{i+1}$ has already been determined by the partial walk $W_i$ and the sequences $\Pi_{p,q}$ for all (start, end) pairs $(p, q)$. 
Let $W_i^+$ denote the walk obtained by inserting the midpoints 
specified by the sequences $\Pi_{p,q}$ into $W_i$.
One of the key ideas of our algorithm is that even though no machine knows $W_i^+$, it is still possible for machine $\sfM$ to compute the truncation point $\ell_{i+1}$ efficiently in the \cc{} model by using distributed binary search. Once $\ell_{i+1}$ is found, $\sfM$ truncates the partial walk $W_i$ to end at time $\ell_{i+1}$.  Even after truncating $W_i$, the bandwidth does not exist for $\sfM$ to receive a truncated version of each sequence $\Pi_{p,q}$.  In the next section we will discuss how we place midpoints to form walk $W_{i+1}$.

To simplify the exposition, we first present a subroutine called $\chk$ that takes an integer $\ell' \leq \ell_i$ and returns \textsf{True} iff $\ell'\leq \ell_{i+1}$.  
This subroutine can be used to perform a binary search for the truncation point $\ell_{i+1}$.  Note that since $W_i^+$ is a partial walk, a given index in the walk may not hold a vertex.  We only perform the binary search over nonempty indices in the walk.  Let $W_i^+[0, \ell']$ denote the prefix of $W_i^+$ up to and including the vertex indexed $\ell'$.  There are two conditions that need to be checked: (i) whether $W_i^+[0, \ell']$ contains at most $\rho$ distinct vertices and (ii) if $W_i^+[0, \ell']$ contains exactly $\rho$ distinct vertices, whether the final vertex in $W_i^+[0, \ell']$ appears exactly once in the walk.
Denote the vertex at index $j$ in $W_i^+$ by $W_i^+[j]$.
Note that machine $\sfM$ can find $W_i^+[j]$ in $O(1)$ rounds.  
In particular, either the vertex at index $j$ is in $W_i$, in which case Machine $\sfM$ already knows this vertex or $W_i^+[j]$ is part of some $\Pi_{p, q}$ and $\sfM$ knows the pair $(p, q)$, in which case $\sfM$ can request $W_i^+[j]$ from the Machine $M_{p,q}$ holding that midpoint.  
We also let $c_{p,q}(\ell')$ denote the number of (start, end) pairs $(p, q)$ in $W_i$ truncated to end at index $\ell'$. 
We include a pair $(p, q)$ in this count even when $p$  
is in the truncated prefix of $W_i$, but $q$ falls past the index $\ell'$ in $W_i$.  Finally, we define $\Ct(p,q,j,\ell')$ as the number of occurrences of $j$ in $\Pi_{p,q}$ up to index $c_{p,q}(\ell')$.

\begin{algorithm}[H]
\caption{$\chk(\ell')$}
\label{check}
\begin{algorithmic}[1] 
    \State Machine $\sfM$ sends $c_{p, q}(\ell')$ to $\sfM_{p, q}$ for each (start, end) pair $(p, q)$ in $W_i$. 
    \State For every vertex $j$, machine $\sfM_{p, q}$ sends $\Ct(p,q, j, \ell')$ to Machine $j$.
    \State For every vertex $j$, Machine $j$ sends $\Ct(j,\ell') = \sum_{(p,q)} \Ct(p,q,j,\ell')$ to $\sfM$
    \State $\sfM$ computes \textsf{Dist}, the number of distinct vertices $j$ that are either contained in $W_i[0,\ell']$ or have $\Ct(j, \ell') > 0$. 
    \If{$\textsf{Dist} > \rho$}: \Return \textsf{False}
    \EndIf

    \State{$\sfM$ computes \textsf{CountLast}, the number of occurrences of $W_i^+[\ell']$ in $W_i[0,\ell']$ plus $\Ct(W_i^+[\ell'], \ell')$}

    \State \Return $(\textsf{Dist}<\rho)\lor (\textsf{CountLast} = 1)$
\end{algorithmic}
\end{algorithm}

\noindent
 \textbf{Midpoint placement via perfect matching sampling.}
 Truncating $W_i$ to end at $\ell_{i+1}$ implies that each sequence $\Pi_{p,q}$ can be truncated to end at index $c_{p,q}(\ell_{i+1})$. However, even this truncation does not suffice in reducing the size of each $\Pi_{p,q}$ to a point where the machines $M_{p,q}$ can communicate these sequences to Machine $\sfM$.

 The next key idea of our algorithm is that the midpoints need not be inserted in the order specified by $\Pi_{p, q}$. They can in fact be inserted in a randomly sampled order, as long as the chronologically final midpoint is accurately placed. This implies that instead of each machine $M_{p,q}$ sending the sequence $\Pi_{p,q}$ to Machine $\sfM$, it is enough for Machine $\sfM$ to know the final midpoint $m_f$ as well as the multiset, $\mathbb{M}\setminus\{m_f\}$, of the other midpoints to insert into $W_i$ to form $W_{i+1}$.  Specifically, the total multiplicity of a vertex $j$ in all the sequences $\Pi_{p,q}$ is given by $\Ct(j, t)$ and Machine $j$ knows this multiplicity and can send it to Machine $\sfM$.  Special care is taken for the final midpoint as placing it in a different position may violate condition (ii) from the previous section.

 In particular, we want to sample an ordering of the midpoints conditioned on the fact that $\mathbb{M}\setminus\{m_f\}$ is the multiset of the non-final midpoints and $m_f$ is placed last.  Recall that finding the final midpoint is easy as we can query $W_i^+[j]$ for any $j$.  Let $S$ be the set of vertices either contained in $W_i[0,\ell_{i+1}]$ or $\mathbb{M}$. Note that $|S| = O(\sqrt{n})$.  $\sfM$ first broadcasts $S$ in $O(1)$ rounds and then receives the submatrix $\adj^{\delta/2}_S$.  $\sfM$ can receive this submatrix in $O(1)$ rounds as it has size $O(n)$.

 After placing the final midpoint in the final midpoint position of $W_i$, we can think of placing the remaining midpoints as sampling a perfect matching in a weighted complete bipartite graph.  
 Specifically, let $\mathbb{M}' = \mathbb{M}\setminus\{m_f\}$ and $\mathbb{P}'$ denote the set of the remaining midpoint positions, i.e., excluding the final midpoint position. Note that multiple midpoint positions in $\mathbb{P}'$ could correspond to the same start-end pair.
 Now consider the complete bipartite graph $\mathcal{B} = K_{|\mathbb{M}'|, |\mathbb{P}'|}$ defined by the sets $\mathbb{M}'$ and $\mathbb{P}'$. To each edge $(x, y)$ in this graph, connecting a midpoint $x \in \mathbb{M}'$ to a midpoint position $y \in \mathbb{P}'$, we assign the weight $\adj^{\delta/2}[p, x]\adj^{\delta/2}[x,q]$ if midpoint position $y$ corresponds to the start-end pair $(p, q)$.

 As described in the introduction, we let the weight of a perfect matching in $\mathcal{B}$ be the product of all of its edge weights. We approximately sample a perfect matching of $\mathcal{B}$ proportional to its weight using the classical results of Jerrum, Sinclair, and Vigoda \cite{Permanent} and Jerrum, Valiant, and Vazirani \cite{JVV}.
 We choose the total variation distance error of the sampler to be at most $\frac{\epsilon}{4 \sqrt{n}\log \ell}$, so that after taking a union bound across all levels and phases the total variation distance error becomes at most $O(\epsilon)$.  This perfect matching assigns to each midpoint a position in the walk.  This completes the description of the first phase of our \cc{} adaptation of the truncated random walk algorithm.
\begin{lemma}\label{matchingLemma}Consider the partial walks $W_i$ and $W_{i+1}$ after level $i$ in the sequential truncated random walk algorithm.  Let $\mathbb{M}$ be the multiset of new midpoints in $W_{i+1}$.  Let $m_f$ be the final such midpoint.  Let $\sigma$ be the total chronological sequence of newly added midpoints to $W_{i+1}$, excluding the final midpoint.  $P(\sigma | \mathbb{M}, m_f)$ is proportional to the weight of any perfect matching in $\mathcal{B}$ that assigns non-final midpoints in the same order as $\sigma$.
\end{lemma}
\begin{proof}
    First, note that in the trivial case where $\sigma$ is not a permutation of $\mathbb{M}\setminus\{m_f\}$, $P(\sigma | \mathbb{M}, m_f) = 0$ and there are no corresponding perfect matchings.  Otherwise, $P(\sigma | \mathbb{M}, m_f) \propto P(\sigma \land \mathbb{M} \land m_f) = P(\sigma \land m_f)$.  Let the midpoints be chronologically labeled $\sigma_1,\sigma_2,...,m_f$.  By the Markov property, since vertices already exist in the walk separating each midpoint position, each midpoint is sampled independently.  Specifically, let $p_i$ be the probability that $\sigma_i$ is chosen as a midpoint between the $i$-th start/end pair in $W_i$.  This probability is given, unnormalized, by Equation \ref{midpointFormula}.  Then 
    $$P(\sigma | \mathbb{M}, m_f) \propto P(\sigma \land m_f) \propto \prod_i p_i$$
    Note that by definition this is proportional to the weight of any corresponding perfect matching.
\end{proof}

\begin{lemma}\label{couplingLemma} The algorithm above generates a random walk beginning at vertex $W[0]$ and ending at vertex $W[\tau]$ with total variation distance within $\frac{\epsilon}{4\sqrt{n}}$.
\end{lemma}
\begin{proof}Think of the sequential truncated random walk algorithm which we already know to be a perfect sampler.  Placing the midpoints in walk $W_i$ to form walk $W_{i+1}$ can be thought about in three theoretical steps, even though this is not the order the algorithm performs them.  First, the final midpoint, $m_f$, is sampled and placed in $W_i$.  Next, a multiset of the remaining midpoints $\mathbb{M}\setminus\{m_f\}$ is sampled.  Finally, a permutation of $\mathbb{M}\setminus\{m_f\}$, $\sigma$, is chosen for placing the remaining midpoints in $W_i$.  Note that our distributed algorithm chooses $m_f$ and $\mathbb{M}\setminus\{m_f\}$ with the exact correct probability, as $W_i^+$ is a true random walk even if it cannot be collected at machine $\sfM$.  Further, by Lemma \ref{matchingLemma} and noting that all permutations have the same number of corresponding perfect matchings in $\mathcal{B}$, the midpoints are also placed with the correct probability.  Thus if we sampled the perfect matching with no error, the walk would be sampled with no error.

By the union bound, the error in our algorithm is at most the same error with which we sample perfect matchings multiplied by the number of perfect matchings we sample.  We sample $O(\log \ell)$ perfect matchings as there are $O(\log \ell)$ levels in a phase.  To be more precise this follows using probabilistic coupling.  We can think of a sampler with error $\psi$ as returning the same sample as some true sampler with probability $1-\psi$.  Furthermore, the error in our algorithm can be bounded by the probability that at any point the approximate sampler for perfect matchings differs from the true uniform sampler.
\end{proof}

Lastly, we analyze the runtime for the first phase of the algorithm.  Note that the time is dominated by the matrix multiplication.

\begin{lemma}\label{timePerPhase}
    The first phase of the algorithm requires $\Tilde{O}(n^\alpha)$ rounds.
\end{lemma}
\begin{proof}
Recall that $\ell$ is $\Tilde{\Theta}(n^3)$. Note that the initial matrix exponentiation can be done, by successively powering the transition matrix, in $\Tilde{O}(n^{\alpha})$ rounds.  Each machine receiving its column in each of the matrices then takes $\Tilde{O}(1)$ rounds.  

It is then sufficient to argue that the $O(\log n)$ levels each require $O(\log n)$ rounds.  Each communication we describe between $\sfM$ and the other machines takes $O(1)$ rounds.  The only non-trivial step here is $\sfM$ broadcasting the set $S$ and receiving back the corresponding submatrix.  As $|S| = O(\sqrt{n})$, broadcast can be done in two rounds.  Furthermore, as the submatrix has size $O(n)$, $\sfM$ can receive it in $O(1)$ rounds using Lenzen's protocol.  Finally, the binary search process takes a total of $O(\log n)$ rounds.
\end{proof}

\subsection{Subsequent Phases}\label{futurePhases}
We now describe a phase of the algorithm after the initial phase.  
These later phases will make use of the derivative graphs \schur$(G,S)$ and \shortcut$(G,S)$. In this subsection we assume that the transition matrices of these graphs have already been computed and distributed, with Machine $i$ holding row $i$ and column $i$ of each of these matrices.
We describe how to compute these transition matrices in Section \ref{schurComplement}.

As in Phase 1, we need to sample a random walk visiting $\rho$ distinct vertices.  This walk will serve as a continuation of the walk generated in the previous phases; however, we avoid revisiting vertices from earlier phases (except the final vertex from the previous phase).  Let $v_f$ be the final vertex visited by the walk of the previous phase.  
Let $\old$ denote the set of vertices visited by the final walk of any previous phase.
Let $S = \{v_f\} \cup (V \setminus \old)$ denote the as-yet-unvisited vertices, along with $v_f$, which will serve as the starting point of the walk in this phase.
Our goal in this and subsequent phases is to find the first visit edge for all vertices in $S$ except $v_f$.  

We use the same algorithm as we used for Phase 1, while making sure that we skip over the vertices in $\old$.  We need to skip vertices in $\old$ to ensure we have the network bandwidth to visit $\rho-1$ distinct new vertices.  In order to do this, we repeat the algorithm for Phase 1 on the Schur complement of $G$ with respect to $S$, \schur$(G,S)$ (see Section \ref{section:Preliminaries}), instead of $G$.  This
explicitly removes all vertices in $\old \setminus \{v_f\}$ from the graph, while ensuring that we are taking a random walk that is equivalent to taking a random walk on $G$ skipping over vertices in $\old \setminus \{v_f\}$.
Since the cover time of \schur$(G,S)$, for any $S \subseteq V$, is bounded above by the cover time of $G$, we can (as in Phase 1) start with a partial walk $W_1 = (W[0], W[\ell])$ with target length 
$\Tilde{\Theta}(n^3)$.

However, taking a random walk on \schur$(G,S)$ does not immediately provide first visit edges in $G$ for the vertices in $S$ visited by the walk.
This requires additional work and in particular involves the use of the shortcut graph \shortcut$(G, S)$ (see Section \ref{section:Preliminaries}). We assume each vertex holds both its row and column in the transition matrix of \shortcut$(G, S)$.
Suppose that $W[i]$ is the first appearance of some vertex $v$ in the random walk.  Since the walk in  
\schur$(G,S)$ corresponds to taking ``shortcuts'' in the original graph (skipping over vertices in $\old \setminus \{v_f\}$), we cannot say that the first entrance edge for $v$ is $(W[i-1], W[i])$.  

We now describe our algorithm for sampling the first incoming edge for $v$ given that the last vertex visited in $S$ was $W[i-1]$.  Let $u$ be a neighbor of $v$ in $G$ and $\deg_S(u)$ be the number of neighbors of $u$ in the set $S$ in the graph $G$.  Then, for all neighbors $u$ of $v$ in $G$, an edge $(u,v)$ is sampled as the first visit edge to $v$ with probability proportional to $\sct[W[i-1], u] \cdot \frac{1}{\deg_S(u)}$. Here $\sct$ is the random walk transition matrix of \shortcut$(G, S)$.  The correctness of this expression follows by Bayes' rule, as $\sct[W[i-1], u]$ gives the probability that a walk started at $W[i-1]$ in $G$ visits $u$ immediately before its first visit (at a time greater than zero) to a vertex in $S$ and $\frac{1}{\deg_S(u)}$ gives the probability that the vertex in $S$ that is then visited is $v$.

Finding the first visit edges of all vertices visited in this phase, via the sampling method
described above, can be implemented in $O(1)$ rounds in the \cc{} model.  See the following psuedocode.

\begin{algorithm}[H]
\caption{Sample first visit edges}
\begin{algorithmic}[1]
    \State Machine \sfM{} obtains walk $W$ on \schur$(G,S)$
    \For{Each distinct vertex $v\neq v_f \in W$}
        \State Machine \sfM{} finds $i = \min\{j\,|\,W[j] = v\}$
        \State Machine \sfM{} sends $W[i-1]$ to machine $v$
        \For{Each neighbor $u$ of $v$}
            \State Machine $v$ requests $\sct[W[i-1], u] \cdot \frac{1}{\deg_S(u)}$ from machine $u$
        \EndFor

        \State Machine $v$ samples vertex $v$'s first entrance edge $(u,v)$, where $u$ is sampled from the (unnormalized) distribution $\left(\sct[W[i-1], u] \cdot \frac{1}{\deg_S(u)}\right)_{u\in N(v)}$ 

    \EndFor
\end{algorithmic}
\end{algorithm}

\subsection{Analysis}
\begin{lemma}\label{mainAlgCorrectness}The final spanning tree is drawn with total variation distance at most $\epsilon$ from the uniform distribution. 
\end{lemma}
\begin{proof}
    Suppose for the moment that in each phase we generate a truly random walk on the graph \schur$(G,S)$ that ends at the first visit to the $\rho$-th distinct vertex.  It is easy to see, by the correctness of the Aldous-Broder algorithm, that we would generate a spanning tree of $G$ sampled uniformly at random.

    Now, recall that we chose $\ell$ such that the probability that any non-final phase fails to visit $\rho$ distinct vertices is at most $\epsilon/2$.  Say in this case we simply output an arbitrary spanning tree.  Furthermore, in each phase, the perfect matching we sample has total variation distance error of at most $\frac{\epsilon}{4\sqrt{n}}$.  We note that there are at most $2\sqrt{n}$ phases for sufficiently large $n$, as we visit at least $\sqrt{n}-2$ distrinct new vertices in each phase.  The total variation distance error of our algorithm then follows by again applying the union bound along with a coupling argument (as in Lemma \ref{couplingLemma}) across all phases and adding the two sources of error.
\end{proof}

\subsection{Computing Schur Complement and Shortcut Graphs}
\label{schurComplement}
We now demonstrate how to approximately compute the shortcut and Schur complement graphs.  In this section we use the term \textit{subtractive error} to refer to negative additive error to emphasize that each approximation is an under-approximation.  We first give a lemma regarding the accuraracy to which we can carry out matrix multiplication.

\begin{lemma}\label{MatrixRoundingLemma}Suppose $M$ is a $n\times n$ transition matrix.  Let $k$ be a power of $2$ that is $O(n^{c_1})$ for $c_1 > 0$.  Let $\beta$ be $\Omega(1/n^{c_2})$ for $c_2 > 0$.  In the \cc{} we can compute $M^k$ with subtractive error at most $\beta$ in $\Tilde{O}(n^{\alpha})$ rounds.
\end{lemma}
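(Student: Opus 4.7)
The plan is to compute $M^k$ by repeated squaring: since $k$ is a power of $2$, we compute $M^{2}, M^{4}, M^{8}, \ldots, M^k$ in $\log_2 k = O(\log n)$ successive squarings, each implemented by one call to the $O(n^{\alpha})$-round matrix multiplication algorithm of \cite{MatrixMult}. After every squaring we round each entry of the product \emph{down} to the nearest multiple of $2^{-B}$, where $B$ is a precision parameter to be chosen. Rounding down keeps the computed matrix under the true value entrywise at each step, so the accumulated error always stays subtractive as required.

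The heart of the argument is an error-propagation recurrence. Let $\tilde M^{(t)}$ denote the algorithm's approximation to $M^{2^t}$ and set $\eta_t := \max_{i,j}\bigl(M^{2^t}[i,j] - \tilde M^{(t)}[i,j]\bigr)$. By induction $0 \le \tilde M^{(t)} \le M^{2^t}$ entrywise, so in particular the row sums of $\tilde M^{(t)}$ remain at most $1$. Writing $\tilde M^{(t)} = M^{2^t} - E_t$ with $0 \le E_t[i,j] \le \eta_t$ and expanding $(\tilde M^{(t)})^2$, one gets
\[
  M^{2^{t+1}}[i,j] - (\tilde M^{(t)})^2[i,j] \;\le\; \eta_t \sum_k M^{2^t}[k,j] \;+\; \eta_t \sum_k M^{2^t}[i,k] \;\le\; (n+1)\,\eta_t,
\]
using that $M^{2^t}$ is row-stochastic (row sums $= 1$) and has entries in $[0,1]$ (so column sums are at most $n$). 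Adding the entrywise rounding error of at most $2^{-B}$ yields the recurrence $\eta_{t+1} \le (n+1)\eta_t + 2^{-B}$. Unrolling with $\eta_0 \le 2^{-B}$ and $t$ going up to $\log_2 k = O(\log n)$ gives $\eta_{\log k} \le 2 \cdot 2^{-B}(n+1)^{\log k} = 2^{-B + O(\log^2 n)}$, where we used $\log k = O(\log n)$ since $k = O(n^{c_3})$.

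Choosing $B = \Theta(\log^2 n)$ with a sufficiently large constant forces $\eta_{\log k} \le \beta = \Omega(n^{-c_4})$, as required. Each entry now occupies $O(\log^2 n)$ bits, i.e.\ $O(\log n)$ messages of the \cc{} model; since the round complexity of the matrix multiplication primitive scales linearly in the number of machine words per entry, each squaring costs $O(n^{\alpha} \log n)$ rounds. Multiplied by $O(\log n)$ squarings, the total is $O(n^{\alpha} \log^2 n) = \tilde O(n^{\alpha})$ rounds. A final step has each Machine $i$ gather its column of $M^k$ via one $O(1)$-round Lenzen routing.

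The main obstacle, and essentially the only nonroutine point, is controlling the per-squaring error blow-up. Because $M$ is only row-stochastic and not column-stochastic, a single squaring can amplify an entrywise error by up to the factor $n+1$ that appears above; after $O(\log n)$ squarings this accumulates to $2^{O(\log^2 n)}$, which is polynomial rather than polylogarithmic. This is precisely what forces $B$ to be $\Theta(\log^2 n)$ (rather than $\Theta(\log n)$), and in turn is what introduces the extra $\log n$ factor in the per-step communication cost. Once this accounting is in place, the remaining ingredients — preservation of sub-stochasticity under squaring-then-rounding, nonnegativity of rounded entries, and standard word-splitting for $B$-bit entries in \cc{} matrix multiplication — are straightforward to verify.
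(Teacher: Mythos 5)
Your proposal is correct and follows essentially the same route as the paper: repeated squaring with downward truncation after each multiplication, the error recurrence $E(2k)\le (n+1)E(k)+\delta$ obtained from row sums being $1$ and column sums being at most $n$, a precision of $\Theta(\log^2 n)$ bits per entry, and the resulting $\tilde O(n^{\alpha})$ round count. Your unrolled bound $(n+1)^{\log_2 k}$ is in fact a slightly cleaner statement of the paper's $k^{c\log k}$ growth factor, but the argument is the same.
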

\begin{proof}
    Let $M'$ be \textit{round}$(M)$, where \textit{round} returns the input matrix with each entry truncated after $O(\log\frac{1}{\delta})$ bits.  In particular, each entry of \textit{round}$(M)$ has at most $\delta$ subtractive error over $M$.  
    
    We will now define $M'(k)$ inductively.  Let $M'(1) = M'$.  Further, when $k$ is a power of two, let $M'(k) = \textit{round}([M'(k/2)]^2)$.  We now analyze the subtractive error, $E(k)$, between $M^k$ and $M'(k)$.  Since the sum of any row is at most $1$ and the sum of any column is at most $n$, $E(1) \leq \delta$ and $E(k) \leq (n+1)E(k/2)+\delta$.  Thus $E(k)$ is $O(\delta k^{c\log k})$ for some $c>0$.

    The proof now follows by choosing $\delta = \Theta(\frac{\beta}{k^{c\log k}})$ and computing $M'(k)$.  Using the \cc{} matrix multiplication algorithm and the fact that each matrix entry requires $O(\log\frac{1}{\delta}) = O(\log^2(n))$ bits we can compute $M'(k)$ in $\Tilde{O}(n^{\alpha})$ rounds.
\end{proof}

\begin{corollary}Let $S$ be any subset of the vertices of $G$.  Let $\beta$ be $\Omega(1/n^{c})$ for $c > 0$.  In the \cc{} we can compute $\sct$, the transition matrix for a random walk on \shortcut$(G,S)$, with subtractive error at most $\beta$ in $\Tilde{O}(n^{\alpha})$ rounds.
\end{corollary}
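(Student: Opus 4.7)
The plan is to write $\sct$ as an infinite matrix series in the transition matrix $\adj$ of $G$, truncate the series after polynomially many terms, and then evaluate the truncation with $O(\log n)$ matrix multiplications, each implemented via Lemma~\ref{MatrixRoundingLemma}.

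First I would derive a closed form. Let $B$ be the $n\times n$ matrix obtained from $\adj$ by zeroing out every column indexed by a vertex in $S$. Then for any $u\in V$ and $v\in\overline{S}$, $[B^k]_{u,v}$ equals the probability that the random walk $x_0=u,x_1,x_2,\ldots$ in $G$ satisfies $x_k=v$ and $x_1,\ldots,x_k\in\overline{S}$ (with $B^0=I$). Writing $q_v:=\sum_{s\in S}\adj[v,s]$ and conditioning on the step $j$ at which the walk first enters $S$ gives
\[
\sct[u,v] \;=\; q_v\cdot\Bigl[\sum_{k=0}^{\infty} B^k\Bigr]_{u,v} \qquad\text{when } v\in\overline{S},
\]
while $\sct[u,u]=q_u$ and $\sct[u,v]=0$ for all other $u,v\in S$. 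Machine $i$ already holds row $i$ and column $i$ of $\adj$, so it can form the corresponding row and column of $B$ locally, and $q_v$ is a single scalar known at machine $v$.

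Second I would truncate and compute. Choose $K=\Theta(n^3\log(n/\beta))$, a power of $2$, large enough that the probability a random walk in $G$ starting from any vertex avoids $S$ for the first $K$ steps is at most $\beta/2$; this follows from the $O(n^3)$ cover-time bound together with a repeated application of Markov's inequality and the strong Markov property, giving geometric decay of the survival probability in blocks of length $O(n^3)$. Approximate $\sct$ via $T:=\sum_{k=0}^{K-1}B^k$, and compute $T$ by the recurrence $T_0=I$, $T_{j+1}=T_j+B^{2^j}T_j$, alongside $B^{2^{j+1}}=(B^{2^j})^2$. Each of the $O(\log K)=O(\log n)$ levels uses two $n\times n$ matrix multiplications, and Lemma~\ref{MatrixRoundingLemma} (whose proof only uses row sums at most $1$, a property enjoyed by the substochastic $B$ and preserved by each $T_j/K$ after scaling) implements each multiplication in $\tilde O(n^\alpha)$ rounds at any desired inverse-polynomial precision. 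A final $O(1)$ rounds of Lenzen routing together with a column-wise scalar multiplication by $q_v$ produces row $i$ and column $i$ of $\sct$ at machine $i$.

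The main obstacle is tracking how the per-entry subtractive error accumulates across the $O(\log n)$ levels, since Lemma~\ref{MatrixRoundingLemma} as stated covers powers of a single matrix rather than the mixed products $B^{2^j}T_j$ appearing in the recurrence. The remedy is to set each intermediate rounding precision to $\beta':=\beta/n^{c'}$ for a sufficiently large constant $c'$, and argue inductively, exactly as in the proof of Lemma~\ref{MatrixRoundingLemma}, that because rows of $B$ (and hence of each $T_j$, up to the factor $K=n^{O(1)}$) sum to at most a polynomial in $n$ and column sums are at most $n$, multiplying an approximate $B^{2^j}$ by an approximate $T_j$ inflates the per-entry subtractive error by at most a $\mathrm{poly}(n)$ factor per level. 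Over $O(\log n)$ levels this yields a total subtractive error of at most $\beta/2$ on every entry of $T$, which combined with the truncation-tail error of at most $\beta/2$ and the scaling by $q_v\in[0,1]$ gives the promised per-entry subtractive error $\beta$ on $\sct$ in $\tilde O(n^\alpha)$ rounds overall.
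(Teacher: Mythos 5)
Your proposal is correct in substance and computes the same underlying object as the paper, but by a noticeably different computational route. The paper builds a $2n$-state absorbing Markov chain on two copies of $V$ (keeping the original transitions among copies of $\overline{S}$, and routing all probability mass destined for $S$ into absorbing states), observes that $\sct[u,v]$ is the $(u',v'')$ entry of the limit $R^{\infty}$, and approximates $R^{\infty}$ by a single power $R^{k}$ with $k=O(n^{3}\log\frac{1}{\delta})$. Because that auxiliary $R$ is itself a genuine transition matrix, Lemma~\ref{MatrixRoundingLemma} applies verbatim and the proof reduces to the cover-time argument plus one invocation of the lemma. Your Neumann-series formulation $\sct[u,v]=q_v\bigl[\sum_{k\ge 0}B^{k}\bigr]_{u,v}$ is algebraically the same quantity --- the absorbing chain's $k$-th power accumulates exactly your partial sums $q_v\sum_{j<k}[B^{j}]_{u,v}$ --- but realizing it via the recurrence $T_{j+1}=T_j+B^{2^{j}}T_j$ forces you to re-derive the rounding analysis for mixed products of non-stochastic matrices, which you correctly identify as the main obstacle. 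Both routes give $\tilde{O}(n^{\alpha})$ rounds; the paper's packaging buys a shorter proof that reuses the lemma as a black box, while yours avoids doubling the matrix dimension and makes the probabilistic identity and the truncation point more explicit.

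One quantitative slip in your error analysis: a $\mathrm{poly}(n)$ inflation \emph{per level}, compounded over $O(\log n)$ levels, is $n^{O(\log n)}$, i.e.\ quasi-polynomial, so an intermediate precision of $\beta'=\beta/n^{c'}$ for a \emph{constant} $c'$ does not suffice. You need $\beta'=\beta/n^{\Theta(\log n)}$, exactly as in the paper's Lemma~\ref{MatrixRoundingLemma}, where $\delta=\Theta(\beta/k^{c\log k})$ and each entry consequently occupies $O(\log^{2}n)$ bits. This is harmless for the round bound (each entry still fits in $\tilde{O}(1)$ words), but the stated constant-exponent precision would not close the induction.
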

\begin{proof}
    Let the transition matrix of the random walk on $G$ be $P$.  We will first construct an auxiliary graph $G'$.  The vertices of $G'$ consist of $L \cup R$, where both $L$ and $R$ contain a new copy of every vertex in $V$.  For a vertex $v \in V$, call its copy in $L$, $v'$, and call its copy in $R$, $v''$.  Again we define $G'$ by defining the transition probabilities, $R$, of its random walk.

    $$\begin{cases}
        R[u'', u''] = 1\\
        R[u', v'] = P[u,v],\,\text{if}\, v'\notin S\\
        R[u', u''] = \sum_{v\in S} P[u,v]
    \end{cases}$$

    Now consider $R^\infty = \lim_{k\to\infty} R^k$.  We can see that $\sct[u,v] = R^\infty[u', v'']$.  Note that since the cover time of $G$ is $O(n^3)$, we can get a $\delta$ subtractive approximation of $R^\infty$ by choosing $k = O(n^3\log\frac{1}{\delta})$.  This follows because all the vertices in $R$ are absorbing and in constant units of cover time, with probability at least $1/2$, any vertex in $L$ will have a transition to a vertex in $R$.   Now the result follows by combining the approximation of $R^\infty$ and Lemma \ref{MatrixRoundingLemma} using the triangle inequality.
\end{proof}

\begin{corollary}\label{shortcut-construction}Let $S$ be any subset of the vertices of $G$.  Let $\beta$ be $\Omega(1/n^{c_1})$ for $c_1 > 0$.  In the \cc{} we can compute $\scc$, the transition matrix for a random walk on \schur$(G,S)$, with subtractive error at most $\beta$ in $\Tilde{O}(n^{\alpha})$ rounds.  Furthermore, this is also true for $\scc^k$, where $k$ is a power of $2$ and $k$ is $O(n^{c_2})$ for $c_2 > 0$.
\end{corollary}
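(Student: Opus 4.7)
The plan is to parallel the construction used for \shortcut$(G,S)$ in the preceding corollary, but to change the absorption behavior so that the absorbing state records the first vertex of $S$ that is entered rather than the vertex preceding that entry. Specifically, I would build an auxiliary graph $G'$ on vertex set $L \cup R$, where $L$ and $R$ are two disjoint copies of $V$; for $v \in V$ let $v'$ and $v''$ denote the copies in $L$ and $R$ respectively. The transition matrix $R$ of the random walk on $G'$ is defined by making each $u'' \in R$ absorbing, and from $u' \in L$ setting $R[u',v'] = P[u,v]$ when $v \in \overline{S}$ and $R[u',v''] = P[u,v]$ when $v \in S$. A walk starting at $u'$ with $u \in S$ then mirrors the walk in $G$ inside $L$ until the first transition into $S$, at which point it is absorbed at the copy of the entered vertex. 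This gives $R^\infty[u',v''] = \scc[u,v]$ for all $u,v \in S$.

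Next, I would approximate $R^\infty$ by a finite power $R^k$. Because the cover time of $G$ is $O(n^3)$ and $|S|\ge 1$, the hitting time of $S$ from any starting vertex of $G$ is $O(n^3)$ in expectation, so by standard Markov-style amplification the probability that a walk in $G'$ has not been absorbed after $k = \Theta(n^3 \log(1/\delta))$ steps is at most $\delta$. Rounding $k$ up to the nearest power of two, $R^\infty - R^k$ is entrywise non-negative and bounded by $\delta$. Since $R$ is a $2n \times 2n$ transition matrix, Lemma \ref{MatrixRoundingLemma} computes $R^k$ in $\Tilde{O}(n^\alpha)$ rounds with an additional subtractive error of at most $\delta$. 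Picking $\delta \le \beta/2$ and combining the two errors by the triangle inequality, the submatrix of the output indexed by $S \times S$ is the desired approximation $\tilde{\scc}$ of $\scc$ with subtractive error at most $\beta$.

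For the moreover clause about \schur$(G,S)^k$, the plan is to feed $\tilde{\scc}$ into Lemma \ref{MatrixRoundingLemma} a second time. The main subtlety, and what I expect to be the principal obstacle, is that $\tilde{\scc}$ is not literally a transition matrix: because each entry is a subtractive approximation, row sums may fall slightly below one and the inductive error bound in Lemma \ref{MatrixRoundingLemma} must be re-examined. Inspecting that proof, however, the only properties used are non-negativity and the fact that each row sums to at most one, both of which $\tilde{\scc}$ satisfies. Iterated squaring with per-level rounding error $\delta'$ therefore still accumulates subtractive error only $O(\delta' k^{c\log k})$; choosing $\delta'$ polynomially small in $n$ and $\beta$ and invoking the lemma produces an approximation of \schur$(G,S)^k$ within subtractive error $\beta$ in another $\Tilde{O}(n^\alpha)$ rounds, keeping the total round cost at $\Tilde{O}(n^\alpha)$.
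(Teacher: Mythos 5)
Your overall framework (absorbing auxiliary chain on $L \cup R$, finite power $R^k$ with $k = \Theta(n^3\log(1/\delta))$, then Lemma \ref{MatrixRoundingLemma}) is in the right spirit, but the key identity you assert, $R^\infty[u',v''] = \scc[u,v]$, is false, and the missing correction is exactly the non-trivial content of this corollary. Your chain absorbs at the \emph{first vertex of $S$ entered after time $0$}, which may be $u$ itself; so $R^\infty[u',v'']$ is the unconditioned first-hit distribution on $S$, with $R^\infty[u',u''] > 0$ in general. By definition, $\scc[u,v]$ is the probability that $v$ is the first vertex of $S \setminus \{u\}$ visited, so the walk is allowed to return to $u$ arbitrarily many times first. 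Summing the geometric series over returns gives $\scc[u,v] = R^\infty[u',v'']/(1 - R^\infty[u',u''])$ for $v \neq u$, and $\scc[u,u]=0$. (The paper's Figure 2 example makes this concrete: your construction would output transition probability $1/3$ from $A$ to $B$, whereas $\schur(G,S)$ has $1/2$.) The paper reaches the same unnormalized quantity differently --- as $(\sct R)[u,v]$, reusing the already-computed shortcut matrix --- but then performs precisely this row-wise normalization by $M_u = 1/(1-(\sct R)[u,u])$.

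The normalization is not merely cosmetic for the error analysis either: dividing a quantity known only up to subtractive error $\beta$ by an approximate value of $1 - R^\infty[u',u'']$ can blow the error up by a factor of $M_u$, so one must argue that $M_u$ is at most polynomial in $n$ (the paper invokes Proposition 2.3 of \cite{CommuteBound} for this) and then shrink $\beta$ accordingly. Without that bound, the return probability could be exponentially close to $1$ and the approximation guarantee would be lost. Your treatment of the ``moreover'' clause --- re-feeding the approximate, substochastic $\tilde{\scc}$ into Lemma \ref{MatrixRoundingLemma} and checking that its induction only needs nonnegativity and row sums at most one --- is correct and matches what the paper implicitly does, but it must be applied to the \emph{normalized} matrix, so the normalization step and the polynomial bound on $M_u$ need to be inserted before it.
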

\begin{proof}
    Let $\sct $ be the transition matrix of \shortcut$(G,S)$.  Let $R$ be a transition matrix defined over $V$ as follows.

    $$R[u,v] = \begin{cases}
        1,\,\text{if}\, u = v \land \deg_S(u) = 0\\
        \frac{1}{\deg_S(u)},\,\text{if}\, \{u,v\} \in E \land v \in S\\
        0,\,\text{otherwise}
    \end{cases}$$

    Recall that unlike $\sct$ and $R$, $\scc$ is only defined over vertices in $S$.  Now $\scc[u,u] = 0$, for all $u$.  Otherwise, $\scc[u,v]$ is proportional to $(\sct R)[u,v]$, with a different proportionality constant for each row.  These constants arise because we don't allow self transitions in $\scc$.   In particular, we need to multiply each row $u$ by $M_u = \frac{1}{1-(\sct R)[u,u]}$.  Note that $(\sct R)[u,u]$ is the probability that a random walk in $G$ started at $u$ revisits $u$ before visiting any other vertex in $S$.  $M_u$ is bounded by a polynomial, $O(n^{c})$ for some $c>0$, by Proposition 2.3 of \cite{CommuteBound}.  
    
    The result now follows by simple algebra and the previous corollary and lemma.
\end{proof}

\subsection{Numerical Precision}
\label{numericalPrecision}
Finally, we show that the entire algorithm can be carried out with the requisite numerical precision.  We analyze the algorithm using approximate probabilities instead of exact probabilities.  We will assume that each matrix has been computed with subtractive error at most 
$\beta$.  $\beta$ will be chosen later in the proof and will be large enough such that each entry will fit in $O(1)$, $O(\log n)$ bit, words of the \cc{} model.  Specifically, $\beta$ can be made $\Omega\big(\frac{1}{n^c}\big)$ for some $c>0$.  All other computations can then be carried out with full precision.  We first prove one helper lemma.

\begin{lemma}
    \label{MidpointTVDLemma}
    Suppose the distribution for placing midpoint $m$ between vertices $p$ and $q$ is given by $\frac{\mathcal{W}[p,m]\mathcal{W}[m,q]}{\mathcal{W}^2[p,q]}$.  Further, suppose $\mathcal{W}^2[p,q] > \frac{1}{n^c}$, for some fixed $c>0$.  We can compute this distribution with maximum total variation distance error $2\beta n^{c+1}$ assuming we have a $\beta$ subtractive approximation of $\mathcal{W}[p,*]$ and $\mathcal{W}[*,q]$.
\end{lemma}
\begin{proof}
    Let the approximation of matrix $\mathcal{W}$ be $\hat{\mathcal{W}}$.
    We will let the unnormalized approximate distribution, $\hat{U}$, be defined by $\hat{U}(m) = \mathcal{W}[p,m]\mathcal{W}[m,q]$.  Note that each term in the approximate distribution has subtractive error at most $2\beta$.  We define the exact normalized distribution $U$ in the same manner using exact probabilities.  Let $C = U - \hat{U}$.

    Note that we can couple the distributions $(U,\hat{U})$ in the following manner.  Sample $x$ from $\hat{U}$ and $y$ from $C$.  Let the sum of all elements in $\hat{U}$ be $Z$.  We choose $r\in[0,\mathcal{W}^2[p,q]]$ uniformly at random.  If $r < Z$, we return the coupled sample $(x,x)$.  Otherwise, return $(y,x)$.  It is easy to see this is a valid coupling, in particular the first component is drawn from the first distribution and the second component is drawn from the second distribution.  The total variation distance between the normalized distributions is bounded by the probability that the first element doesn't equal the second element in the coupling.  Thus the total variation distance is at most $P(r > Z)$.  This is bounded by $\frac{2\beta n}{\mathcal{W}^2[p,q]} \leq 2\beta n^{c+1}$.  Thus normalizing $\hat{U}$ gives our desired distribution.
\end{proof}

\begin{lemma}\label{BetaLemma}There exists a choice of $\beta$ such that the approximate algorithm draws spanning trees from a distribution with a total variation distance at most $\epsilon$ from the uniform distribution.
\end{lemma}
\begin{proof}
We will consider four versions of the subroutine for sampling random walks in our algorithm for sampling spanning trees.  We can use either the sequential truncated random walk algorithm or the distributed truncated random walk algorithm.  Either of these algorithms can be used with exact or approximate probabilities.  Using the the FPRAS of \cite{Permanent}, we know that we can make the distributions of spanning trees generated using the approximate sequential and approximate distributed random walk algorithms have a total variation distance of at most $\frac{\epsilon}{3}$.  Furthermore, we already know that generating spanning trees using the sequential truncated walk algorithm with exact probabilities can give a sample with total variation distance at most $\frac{\epsilon}{3}$.  Thus, by two applications of the triangle inequality, it is sufficient to show that for sufficiently small $\beta$, the distributions of trees generated using the sequential truncated random walk algorithms with exact and approximate probabilities also have a total variation distance of at most $\frac{\epsilon}{3}$.

We will use a coupling argument to show that the sequential algorithm using approximate probabilities only has a total variation distance of at most $\frac{\epsilon}{3}$ from the sequential algorithm using exact probabilities.  It follows easily from probabilistic coupling, that the total variation distance between the two algorithms outputs is at most the probability that the algorithms make a different decision at some point in their computation, using the same source of randomness.  Furthermore, it follows that if the two algorithms are drawing samples from distributions with total variation distance at most $\delta$, they will draw different samples with probability at most $\delta$.

Now inductively suppose that the exact and approximate algorithm have made the same choices so far.  Suppose we are sampling a midpoint $b$ between vertices $a$ and $c$.  We want to sample $b$ with probability $\frac{P[a,b]P[b,c]}{P^2[a,c]}$ for one of the matrices $P$.  For now assume that $P^2[a,c] > \frac{1}{n^{k_1}}$, for a yet to be chosen $k_1 > 0$.  By Lemma \ref{MidpointTVDLemma}, we can compute this distribution with total variation distance at most $2\beta n^{k_1+1}$.
This bound on the total variation distance is also a bound on the probability that the two algorithms will sample different choices of $b$.

Now, as we are choosing $\Tilde{O}(n^3)$ midpoints, we can take the union bound to get the probability that the algorithms ever sample a different midpoint is at most $$\tilde{O}\bigg(n^3 \beta n^{k_1+1}\bigg) + \Tilde{O}\bigg(n^3\frac{n}{n^{k_1}}\bigg)$$
The second term is a bound on the probability that our assumption on $k_1$ ever fails.  We also have to ensure the algorithms both sample the initial endpoint identically, but this is trivial.

We also have to bound the probability that the first visit edges are sampled differently.  We get an expression of a similar form, the only nontrivial fact that we need is that $M_u$ is bounded by some polynomial $n^{k_2}$, as seen in Corollary \ref{shortcut-construction}.  Then we get the result we need by choosing $k_1$ to be sufficiently large and $\beta$ to be sufficiently small.
\end{proof}

\subsection{Putting It All Together}
\mainresult*
\begin{proof}
   Altogether, the approximate algorithm has a runtime of $\Tilde{O}(n^{1/2 + \alpha})$ rounds as there are $O(\sqrt{n})$ phases which each take $\Tilde{O}(n^\alpha)$ rounds.  We already know that excluding computing approximate matrix powers, the Schur complement graph, and the shortcut graph, the time per phase is $\Tilde{O}(1)$ rounds, see Lemma \ref{mainAlgCorrectness}.  Note that Section \ref{schurComplement} along with the choice of $\beta$ from Lemma \ref{BetaLemma} implies that computing these takes $\Tilde{O}(n^\alpha)$ rounds per phase.
\end{proof}

\section{Fast Random Walk Via Doubling}
Bahmani, Chakrabarti, and Xin \cite{BCX} present an algorithm called \dbling that computes a random walk of length $\tau$ in $O(\log \tau)$ iterations in the Map-Reduce model.
In this paper, the Map-Reduce model is specified somewhat abstractly, without an explicit notion of machines, communication bandwidth, etc.
As a result, a single iteration (consisting of a Map phase, Shuffle phase, and Reduce phase) in this Map-Reduce model may require $\Omega(n)$ rounds in the \cc{} model and so using the \dbling algorithm directly in the \cc{} model is not efficient. 
In this section we present a ``load balanced'' version of the \dbling algorithm that can compute a length-$n$ walk in $\pln$ rounds in the \cc{} model.

\noindent
\textbf{High-level idea.} To compute a length-$\tau$ random walk, the \dbling algorithm starts with each vertex $v$ computing $\tau$ length-1 random walks (i.e., random edges) originating at $v$. 
During each iteration, half the walks held by a vertex are designated to serve as prefixes and the other half are designated to serve as suffixes in the merging process. 
In each iteration, walks that end at a vertex $v$ are merged with walks originating at $v$. 
After the merging process, each vertex $v$ continues to hold walks originating at $v$.
Note that this halves the number of walks held by a vertex, while doubling the length of every walk. After $O(\log \tau)$ of these ``doubling'' iterations, every node $v$ holds a length-$\tau$ random walk originating at vertex $v$. While each walk is a proper random walk, 
because of how the walks are merged (described in detail below), walks originating at different vertices are not independent.  
The main challenge for the \dbling algorithm is to implement the merging step efficiently in the \cc{} model. 
It turns out that even for relatively short walks, i.e., with $\tau = \Theta(n)$, a faithful implementation of the \dbling algorithm requires $\Omega(n^2 \log n)$ bits to travel to a particular vertex in a single merging step in the worst case, which 
requires $\Omega(n)$ rounds in the \cc{} model. 
We add a ``load balancing'' component to each merging step so as to take advantage of the 
overall $\Theta(n^2)$ bandwidth of the \cc{} model.
We show that with this ``load balancing'' step in place, each merging step takes $O(\log n)$ rounds, w.h.p.

\smallskip

\noindent
\textbf{Load-balanced Doubling Algorithm.} We now describe our algorithm.
Consider an iteration of the \dbling algorithm. We assume that just before the start of this iteration, each vertex holds a sequence of $k$ walks of length $\eta$ each. 
Let $W_v^1, W_v^2, \ldots, W_v^k$ denote the sequence of $k$ length-$\eta$ walks held by vertex $v$. For any $v$ and $i$, we use $W_v^i[{\sf end}]$ to denote the ID of the last vertex of the walk, $W_v^i$. 

We further assume that both $k$ and $\eta$ are powers of 2 and $k \cdot \eta$ is the smallest power of 2 that is at least $\tau$. Before the first iteration, $k$ is the smallest power of 2 that is at least $\tau$ and $\eta$ is 1. In each iteration, $k$ halves and $\eta$ doubles.

\begin{enumerate}
\item Machine $1$ picks a binary string $s$ of length $O( \log^2 n)$ uniformly at random and broadcasts $s$ to all other machines.  Every machine then uses $s$ to pick a hash function $h_s$ from a family of $8c\log n$-wise independent hash functions $\mathcal{H} = \{h : [n] \times [k] \to [n]\}$ for constant $c > 1$.\footnote{For positive integers $N > M$ and $t$, there is a family of $t$-wise hash functions, $\calH =\{h_s: [N]\rightarrow [M] \}$ that allows us to uniformly sample from $\calH$ using $O(t\cdot \log N)$ bits and $\text{poly}(\log N, t)$ time (sequential) computation \cite{Vadhan12}.} 
%Here $\mathcal{I}$ is the set of \id's of the vertices.

% \item For $i=1, 2, \dots, k/2:$ each vertex $v$ sends the tuple $(\id_v, i, W_v^i)$ to the vertex with \id $=h_s(\text{lastID}(W_v^i), k-i)$.
% \item For $i=1, \dots, k/2:$ each machine $v$ sends the tuple $(v, i, W_v^i)$ to machine $v'= h_s(\text{lastID}(W_v^i), k-i)$.

\item For $i=1, \dots, k/2:$ each machine $v$ sends the tuple $(v, i, W_v^i)$ to machine $v'= h_s(W_v^i[{\sf end}], k-i+1)$.

\item For $i=k/2+1,\dots, k:$ each machine $v$ sends the tuple $(v, i, W_v^i)$ to machine $v''=h_s(\id_v, i)$.

% \item For $i=k/2+1,\dots, k:$ each vertex $v$ sends $(\id_v, i, W_v^i)$ to the vertex whose \id is $h_s(\id_v, i)$.

% \item Each machine $w$ receiving two walks $W_u^i$ and $W_v^j$, where $\text{lastID}(W_u^i) = \id_v$, $1 \le i \le k/2$, and $i + j = k$ concatenates $W_u^i$ and $W_v^j$ and sends the tuple $(i, W_u^i \circ W_v^j)$ to machine $u$.

\item Each machine $w$ receiving two walks $W_u^i$ and $W_v^j$, where $W_u^i[{\sf end}] = \id_v$, $1 \le i \le k/2$, and $i + j = k+1$ concatenates $W_u^i$ and $W_v^j$ and sends the tuple $(i, W_u^i \circ W_v^j)$ to machine $u$. (Here $W_u^i \circ W_v^j$ denotes the concatenation of $W_u^i$ and $W_v^j$.)

\item Each machine $v$, on receiving a tuple $(i, W)$, sets $W_v^i := W$.
\end{enumerate}
Note that in the above index-based merging scheme (which is due to Bahmani, Chakrabarti, and Xin \cite{BCX}) a walk $W_u^i$ for $1 \le i \le k/2$, that ends at vertex $v$, is merged with a walk, $W_v^{k-i+1}$, originating at $v$. In other words, a walk from the first half of $u$'s sequence of walks is merged with the corresponding walk from the second half of $v$'s sequence of walks. As shown in \cite{BCX}, this index-based merging scheme suffices to guarantee that every walk is indeed a random walk.

For our analysis, we will need the following concentration inequality from Bellare and Rompel\cite{BR94}. 
\begin{fact}[$t$-wise Concentration Bound]
\label{fact:twise}
Let $Y_1,\dots, Y_m$ be $t$-wise independent random variables taking values in $[0,1]$ and $Y=\sum^m_{i=1} Y_i$ with $\E[Y]=\mu$. Then, for any $a>0$, we have, 
\begin{align*}
    \Pr\big[|Y-\mu|\geq a \big] \leq 8\bigg(\frac{t\mu+t^2}{a^2}\bigg)^{\frac{t}{2}}. 
\end{align*}
\end{fact}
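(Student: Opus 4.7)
The plan is to prove this concentration inequality via the $t$-th moment method, following the original argument of Bellare and Rompel. First, I would reduce to the case that $t$ is an even integer: if $t$ is odd, replace $t$ by $t-1$, and the factor of $8$ in the statement absorbs the constant loss. Let $Z_i := Y_i - \E[Y_i]$, so that $\E[Z_i] = 0$, $|Z_i| \leq 1$, and $Z := Y - \mu = \sum_{i=1}^m Z_i$. Since $t$ is even, $Z^t \geq 0$, and Markov's inequality gives
\begin{equation*}
\Pr\big[|Y - \mu| \geq a\big] \;=\; \Pr\big[Z^t \geq a^t\big] \;\leq\; \E[Z^t]\big/a^t,
\end{equation*}
so the whole game reduces to proving an upper bound of essentially $(t\mu + t^2)^{t/2}$ on $\E[Z^t]$.

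To bound $\E[Z^t]$, I would expand it as $\sum_{(i_1,\ldots,i_t) \in [m]^t} \E\big[Z_{i_1} \cdots Z_{i_t}\big]$. By $t$-wise independence of the $Y_i$'s (hence of the $Z_i$'s), each such expectation factors across distinct indices; and because $\E[Z_i] = 0$, a tuple $(i_1, \ldots, i_t)$ contributes nothing unless every distinct index value appearing in it does so at least twice. Grouping tuples by the partition $\pi$ of $[t]$ induced by equality of indices (that is, $j \sim_\pi j'$ iff $i_j = i_{j'}$), only partitions all of whose blocks have size $\geq 2$ survive, and in particular such a $\pi$ has at most $k \leq t/2$ blocks. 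For a surviving $\pi$ with block sizes $s_1, \ldots, s_k$ and representative distinct indices $j_1, \ldots, j_k$, the term equals $\prod_{r=1}^{k} \E[Z_{j_r}^{s_r}]$. Using $|Z_i| \leq 1$ together with $s_r \geq 2$ gives $\E[|Z_{j_r}|^{s_r}] \leq \E[Z_{j_r}^2] \leq \E[Y_{j_r}]$ (the last step since $Y_i \in [0,1]$), and summing $\prod_r \E[Y_{j_r}]$ over ordered $k$-tuples of distinct indices in $[m]$ is at most $\big(\sum_i \E[Y_i]\big)^k = \mu^k$.

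The main obstacle will be the final combinatorial accounting: counting the number $N(t,k)$ of ordered $t$-tuples that induce a partition into $k$ blocks of size $\geq 2$, and then showing
\begin{equation*}
\E[Z^t] \;\leq\; \sum_{k=1}^{t/2} N(t,k)\,\mu^k \;\leq\; 8\,(t\mu + t^2)^{t/2}.
\end{equation*}
The two regimes to balance are $k \approx t/2$, where every block has size exactly two and the dominant contribution is of order $(t\mu)^{t/2}$ (the Gaussian-like variance term), versus smaller $k$ with larger blocks, whose contribution is controlled by $t^{2k}$-type factors and collectively produces the $t^2$ summand inside the bracket. Using the multinomial coefficient $t!/\prod_r s_r!$ for the number of orderings compatible with a given composition $(s_1, \ldots, s_k)$ and a standard bound on the number of compositions of $t$ into $k$ parts each $\geq 2$ (via binomial or Stirling estimates) should close the argument. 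Dividing the resulting moment bound by $a^t$ and accounting for the parity reduction then yields the stated inequality.
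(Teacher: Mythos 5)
The paper does not prove this statement; it is imported verbatim (as a ``Fact'') from Bellare and Rompel \cite{BR94}, and your $t$-th moment plan is exactly the argument used there: Markov on $\E[(Y-\mu)^t]$, expansion over index tuples, vanishing of any tuple with a singleton index by mean-zero-ness and $t$-wise independence, the bound $\E[|Z_j|^{s}]\le\E[Y_j]$ for $s\ge 2$, and a final count of surviving tuples matched against the binomial expansion of $(t\mu+t^2)^{t/2}$. So the route is right. Two points, however, keep this from being a complete proof. First, your reduction from odd $t$ to even $t-1$ does not lose ``a constant absorbed by the $8$'': it changes the \emph{exponent}. From the even case you get $\Pr[|Y-\mu|\ge a]\le 8\bigl((t-1)\mu+(t-1)^2\bigr)^{(t-1)/2}/a^{t-1}\le 8\,B^{(t-1)/2}$ with $B=(t\mu+t^2)/a^2$, and when $B<1$ (the only regime where the inequality is useful) this is \emph{weaker} than the claimed $8\,B^{t/2}$, so the stated bound for odd $t$ does not follow. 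The clean fix is to note that Bellare--Rompel state the lemma only for even $t\ge 4$, and that the paper's application ($t=8c\log n$, rounded to an even integer) only needs that case; if you insist on arbitrary $t$ you must either restate the fact or argue the odd case separately (e.g.\ via $\E[|Z|^t]\le\E[Z^{t-1}]^{t/(2(t-1))}\cdot(\text{something})$, which needs care).

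Second, the step you label ``the main obstacle'' is in fact the entire technical content of the lemma, and you assert rather than prove it. The inequality $\sum_{k\le t/2} N(t,k)\,\mu^k\le 8\,(t\mu+t^2)^{t/2}$ requires showing that the number of ordered $t$-tuples whose induced partition has exactly $k$ blocks, all of size $\ge 2$, is at most (roughly) $8\binom{t/2}{k}t^{t-k}$, so that the sum is dominated termwise by the binomial expansion $\sum_{k}\binom{t/2}{k}(t\mu)^k(t^2)^{t/2-k}$. The multinomial/Stirling bookkeeping here is where the specific form $t\mu+t^2$ and the constant $8$ come from, and a gesture toward ``standard bounds'' does not establish it. As a blind reconstruction your outline is faithful to the source, but to stand as a proof it needs the odd-$t$ claim corrected and the tuple-counting estimate carried out.
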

\noindent
Equipped with the fact above, we prove the following key claim that roughly says that in the load-balanced doubling algorithm each vertex sends and receives $O(k \log n)$ tuples. 
% \begin{fact}
%     Let $N>M$ be two integers. There exists a family of $t$-wise hash functions $\calH=[N]\rightarrow [M]$ such that sampling $h\sim \calH$ takes $t\cdot \log N$ bits and moreover, any function from the family can be efficiently evaluated. 
% \end{fact}

% \begin{claim}
% \label{clm:doubling}
%     Let $c>0$. Then, in step $2$ and step $3$, in the above algorithm, we have:
%     \begin{align*}
%         \Pr[\text{ There is a vertex }v  \text{ receiving tuples } \geq  16ck\log n] \leq n^{-c}
%     \end{align*}. 
% \end{claim}

\begin{lemma}
\label{clm:doubling}
    Let $c>1$ be any constant. Then, in step $2$ and step $3$, in the above algorithm, for any $v\in V$, it holds that:
    \begin{align*}
        %\Pr[\text{ There is a vertex }v  \text{ receiving tuples } \geq  16ck\log n] \leq n^{-c}
\Pr[\text{Machine }v \text{ receives } \geq  16ck\log n \text{ tuples }]\leq n^{-2c}
    \end{align*}. 
\end{lemma}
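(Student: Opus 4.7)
The plan is to split the number of tuples received by machine $v$ into the contributions $Y^{(2)}_v$ from step~2 and $Y^{(3)}_v$ from step~3, and bound $\Pr[Y^{(s)}_v \geq 8ck\log n]$ by $n^{-2c}/2$ for each $s\in\{2,3\}$. A union bound then gives $\Pr[Y^{(2)}_v + Y^{(3)}_v \geq 16ck\log n] \leq n^{-2c}$ as claimed. In both cases the natural tool is Fact~\ref{fact:twise} with $t = 8c\log n$, noting that the marginal probability that any individual tuple lands at $v$ is $1/n$, so the expected contribution of each step is $k/2$.

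Step~3 is the clean case. The $nk/2$ hash inputs $(\id_w, i)$ for $w \in V$ and $i \in \{k/2+1, \ldots, k\}$ are pairwise distinct by construction, so by the $8c\log n$-wise independence of $h_s$, the indicators $\boldone[h_s(\id_w, i) = v]$ are $t$-wise independent Bernoulli$(1/n)$ variables. Applying Fact~\ref{fact:twise} to their sum with $\mu = k/2$ and deviation $a \ge 4ck\log n$, a routine simplification of $8\bigl((t\mu+t^2)/a^2\bigr)^{t/2}$ shows the tail is at most $n^{-\Omega(c)}$, which comfortably dominates $n^{-2c}/2$.

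Step~2 is the more delicate case, because the tuple $(w, i)$ with $i \in \{1, \ldots, k/2\}$ is sent to $h_s(W_w^i[\mathsf{end}], k-i)$, and two such hash inputs coincide whenever walks sharing the same index $i$ end at a common vertex. I would condition on the random walks, which makes the multiplicity $c_{x,j}$ of each hash input $(x,j)$ deterministic with $\sum_{(x,j)} c_{x,j} = nk/2$, and then write
\[
Y^{(2)}_v = \sum_{(x,j)} c_{x,j}\,\boldone[h_s(x,j) = v],
\]
a weighted sum of $t$-wise independent Bernoullis with conditional expectation $k/2$. The hard part is that the weights $c_{x,j}$ can in principle be as large as $n$, so Fact~\ref{fact:twise} does not apply off the shelf. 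I would proceed by a direct $t$-th moment computation in the spirit of the proof of that fact: expanding $\E[(Y^{(2)}_v - k/2)^t]$, $t$-wise independence forces any multi-index $(x_1,j_1),\dots,(x_t,j_t)$ in which some distinct hash input appears exactly once to contribute $0$, so only configurations with at most $t/2$ distinct inputs each occurring at least twice survive. Bounding each such term using $\bigl|\E[(\boldone[h_s(x,j)=v] - 1/n)^a]\bigr| \le 1/n$ for $a \ge 2$, together with a multinomial rearrangement that folds in the constraint $\sum c_{x,j} = nk/2$, yields an upper bound on the $t$-th moment, and Markov's inequality then gives the desired $n^{-2c}/2$ tail. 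I expect the main technical obstacle to be the combinatorial accounting in this moment bound; in particular, it may be cleanest to separate the hash inputs into ``heavy'' and ``light'' classes (by some threshold depending on $c\log n$) and control the contributions of the two classes by separate arguments before combining.
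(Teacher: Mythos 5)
Your step-3 argument is, in substance, the paper's entire proof: the paper does not split the two steps, but simply declares all $nk$ indicators $Y_j$ (``machine $v$ gets the $j$-th tuple'') to be $t$-wise independent Bernoulli$(1/n)$ variables with $\mu=k$ and applies Fact~\ref{fact:twise} once with $a=kt$. That application is justified exactly when the hash inputs are pairwise distinct, which is true in step~3 (inputs $(\id_u,i)$) but, as you correctly observe, not in step~2, where every index-$i$ walk ending at $x$ is routed to the same value $h_s(x,k-i)$ and the corresponding indicators are identical rather than independent. So your proposal is strictly more careful than the paper's proof, and the point you isolate is a real gap in it, not a pedantic one.

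The trouble is that the repair you sketch for step~2 cannot be completed. Conditioned on the walks, $Y^{(2)}_v=\sum_{(x,j)}c_{x,j}\,\boldone[h_s(x,j)=v]$, and if any single multiplicity satisfies $c_{x,j}\geq 16ck\log n$, then the one event $h_s(x,j)=v$, of probability exactly $1/n$, already pushes $Y^{(2)}_v$ over the threshold, so $\Pr[Y^{(2)}_v\geq 16ck\log n]\geq 1/n\gg n^{-2c}/2$. No $t$-th moment computation, and no heavy/light decomposition, can beat the tail contributed by a single heavy atom; the ``heavy'' class is precisely where the inequality fails. And such multiplicities do occur: $c_{x,k-i}$ counts the index-$i$ walks ending at $x$, which is $\Theta(n)$ in expectation (and with high probability) for a graph with a dominant-degree vertex and even $\eta$, whereas the threshold $16ck\log n$ is $o(n)$ in the late iterations where $k$ is small. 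In other words, the statement you set out to prove for step~2 is false in the worst case over walk realizations, and the lemma as stated holds only under the unspoken assumption that the step-2 hash inputs are essentially distinct --- the assumption the paper's one-shot application of Fact~\ref{fact:twise} implicitly makes. Closing the gap requires changing the scheme (e.g., hashing on the sender's identity $(u,i)$ so that inputs are distinct by construction, with the receiver of $W_x^{k-i}$ told where to fetch its continuations) rather than sharpening the concentration argument.
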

\begin{proof}
 In step $2$ and $3$ of the load-balanced algorithm above, a total of at most $nk$ number of tuples are distributed using a $t$-wise hash function among $n$ number of vertices where $t=8c\log n$. 
%Using union bound we have: 
% \begin{align*}
%      \Pr[\text{ There is a vertex }v  \text{ receiving tuples } \geq  16ck\log n]\leq n\cdot \max_{v\in V} \Pr[v \text{ receives } \geq  16ck\log n \text{ tuples }]
% \end{align*}
% Thus, it suffices to show: $\Pr[v \text{ receives } \geq  16ck\log n \text{ tuples }]\leq n^{-2c}$ for all vertex $v$. 
Towards proving the claim, fix any machine $v$. We define $Y_{j}$ to be the event that machine $v$ gets the $j$-th tuple for $j=1,2,\dots,nk$. Note that, $\mu:=\E\big[\sum Y_j\big]=nk\cdot \frac{1}{n}=k$. 
%&= \sum_{v\in V}\Pr[v \text{ receives } \geq  16ck\log n \text{ tuples }]
%$m=|\mathcal{I}|\cdot k=nk$. 
%Thus, $\mu=\sum_{j}.. = nk\cdot \frac{1}{n}=k$. Setting
Now we set, $a= k\cdot t = 8kc\log n$. Using Fact~\ref{fact:twise} on these random variables, we get,
    \begin{align*}
        \Pr[v \text{ receives } \geq  16ck\log n \text{ tuples }]&\leq\Pr\big[|Y-k|\geq k\cdot t \big] \\& \leq 8\bigg(\frac{t(k+t)}{t^2k^2}\bigg)^{\frac{t}{2}}
        %\\&=8\bigg(\frac{t}{T^2k}+\frac{t^2}{T^2k^2} \bigg)^{\frac{t}{2}}
        \\&\leq 8 \bigg(\frac{1}{8ck\log n}+\frac{1}{k^2} \bigg)^{4c\log n}\\&\leq 8\cdot k^{-4c\log n}\\&\leq 8\cdot 2^{-4c\log n}= n^{-2c}
    \end{align*}
In the fourth line, we use the crude bound $\bigg(\frac{1}{8ck\log n}+\frac{1}{k^2} \bigg)\leq \frac{1}{k}$ and in the last line, we used the assumption $k\geq 2$. 

% \begin{align*}
%     \Pr\big[|Y-k|\geq k\cdot T \big] \leq 
% \end{align*}
 \end{proof}
    
Now, we are ready to bound the running time of one iteration of the load-balanced doubling algorithm. 
%We have the following claim. 
\begin{lemma}
A single iteration of the Load-balanced Doubling algorithm runs in 
\begin{itemize}
\item $O\left(\frac{\tau}{n} \log n\right)$ rounds with high probability, if $\tau = \Omega(n/\log n)$.
\item $O(1)$ rounds with high probability, if $\tau = O(n/\log n)$.
\end{itemize}
%~in the \cc{} model. After $O(\log n)$ iterations of this algorithm, running in $O(\log^2 n)$ rounds w.h.p.~in the \cc{} model, every vertex $v$ holds a length-$n$ random walk originating at $v$.
\end{lemma}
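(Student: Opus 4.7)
The plan is to identify the per-machine communication load in each step of one iteration, express it as a count of $O(\log n)$-bit messages, and then invoke Lenzen's routing protocol to turn message counts into round complexities.

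First I would observe that each tuple $(v, i, W_v^i)$ fits in $O(\eta \log n)$ bits (one $O(\log n)$-bit identifier each for $v$ and $i$, plus $O(\log n)$ bits for each of the $\eta$ vertices in the walk), and hence consists of $O(\eta)$ messages. The merged walk $W_u^i \circ W_v^j$ sent in step $4$ has length $2\eta$ and likewise occupies $O(\eta)$ messages. The broadcast of the $O(\log^2 n)$-bit hash seed in step $1$ can be done in $O(1)$ rounds by having machine $1$ send distinct $O(\log n)$-bit chunks of the seed to $O(\log n)$ helper machines in one round, and then having each helper rebroadcast its chunk to all $n$ machines in the next round.

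For steps $2$ and $3$, each machine $v$ sends exactly $k$ tuples, for a total of $O(k\eta) = O(\tau)$ outgoing messages. Applying Lemma \ref{clm:doubling} together with a union bound over the $n$ destination machines, with high probability every machine receives at most $O(k \log n)$ tuples, i.e., $O(k\eta \log n) = O(\tau \log n)$ incoming messages. For step $4$, a machine $w$ forms at most one merged walk per matching pair of received tuples, so it sends at most $O(k \log n)$ merged walks totaling $O(\tau \log n)$ messages, while each machine $u$ receives only the $k/2$ merged walks corresponding to the indices $i \le k/2$ it sent out, for $O(\tau)$ incoming messages. Thus in every step the maximum per-machine send-plus-receive load is $O(\tau \log n)$ messages with high probability.

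By Lenzen's routing theorem \cite{Lenzen13}, any permutation of $O(n)$ messages per machine can be delivered in $O(1)$ deterministic rounds; splitting the $O(\tau \log n)$-message load into batches of size $O(n)$ therefore takes $O(\tau \log n / n + 1)$ rounds per step. Since the iteration has only a constant number of steps, the total per-iteration round complexity is $O(\tau \log n / n + 1)$, which is $O(\tau \log n / n)$ when $\tau = \Omega(n/\log n)$ and collapses to $O(1)$ when $\tau = O(n/\log n)$, matching the lemma. The main technical obstacle is correctly converting the tuple-count bound from Lemma \ref{clm:doubling} into a message-count bound via the $O(\eta)$ per-tuple blow-up, since this is what produces the $\tau \log n / n$ dependence rather than the smaller $k \log n / n$ one would naively hope for from Lemma \ref{clm:doubling} alone.
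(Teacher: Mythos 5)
Your proposal is correct and follows essentially the same route as the paper: per-step accounting of tuples, the $O(\eta)$ messages-per-tuple conversion, Lemma~\ref{clm:doubling} (with a union bound) for the receive load in steps 2--3, bounding the step-4 send load by the steps 2--3 receive load, and Lenzen's routing with $k\eta = \Theta(\tau)$ to conclude. The only cosmetic difference is that you spell out the seed-broadcast and the union bound explicitly, which the paper leaves implicit.
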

%\begin{claim}
%Each of the first 4 steps in the Load-balanced Doubling algorithm satisfies the property: every machine sends at most $O(n \log^2 n)$ bits and receives at most $O(n \log^2 n)$ bits, w.h.p. 
%\end{claim}
\begin{proof}
First, note that, each tuple of the form $(\id_v, i, W_v^i)$, used in our distributed doubling algorithm, requires $O(\eta\log n)$ bits to encode, which is $O(\eta)$ messages. 
We prove the claim by explicitly computing the number of bits communicated by each machine in each of the step in the algorithm.  
\begin{itemize}
\item Step 1: Machine 1 sends a randomly sampled string $s$ of size $O(\log^2 n)$ bits to all the other vertices. This can be completed in 2 rounds of communication.
%The leader chopping up $s$ into $O(\log n)$ chunks with $O(\log n)$ bits each. 
%, totalling $O(n\log ^2n)$ bits and each machine receives $O(\log^2 n)$ bits in this step. 
    \item Step 2 and 3: Each machine $v$ sends $O(k)$ tuples; so sends $O(k\eta)$ messages.
    %$=O(n\log n)$ number of bits. 
    On the reception side, it follows from Lemma~\ref{clm:doubling} that each machine $v\in V$ receives less than $16ck\log n$ tuples with probability $\geq 1- n^{-2c}$. Thus, each machine receives $O(k \eta \log n)$ messages with high probability.
    Using Lenzen's routing protocol \cite{Lenzen13}, this communication can be completed in the \cc{} model in $O\left(\max\left\{\frac{k\eta}{n} \log n, 1\right\}\right)$ rounds. 

    \item Step 4: Each machine can only send at most the number of messages they received at the end of the Steps 2 and 3, which is $O(k\eta \log n)$ with high probability. On the other hand, in this step, each machine $v$ receives $k/2$ merged walks of $2\eta$-length. %This is because each of these walks was generated using $\eta$-length walks that started $v$ as their first half and $k/2$ of them were generated in Step 2. 
    It follows that in Step 4, each machine $v$, receives $O(k\eta)$ messages. 
    Thus, as in Steps 2 and 3, the communication in Step 4 can be completed in the \cc{} model in $O\left(\max\left\{\frac{k\eta}{n} \log n, 1\right\}\right)$ rounds.
\end{itemize}
Noting that in all iterations of the algorithm, $k \eta = \Theta(\tau)$, we obtain the result.
\end{proof}

The following theorem is an immediate consequence of the previous lemma, along with the observation that it takes the \dbling{} algorithm $O(\log \tau)$ iterations to go from length-1 walks to length-$\tau$ walks.
\secondResult*

\section{Conclusion}
Sampling random spanning trees uniformly is a fundamental problem with beautiful mathematical connections to random walks, electrical circuits, and graph Laplacians. The problem also has important applications to graph sparsifiers and serves as the basis for new approximation algorithms.
Using the Aldous-Broder algorithm based on random walks, we present the first sublinear round algorithm for approximately sampling uniform spanning trees in the \cc{} model of distributed computing. 
In particular, our algorithm requires $\Tilde{O}(n^{1/2 + \alpha})$ rounds for sampling a spanning tree from a distribution with total variation distance within $1/n^c$, for arbitrary constant $c > 0$, from the uniform distribution.  In addition, we show how to take somewhat shorter random walks, even more efficiently in the \cc{} model, leading to much faster spanning tree sampling algorithms for graphs with small cover times.

We hope that our results and techniques will spark new research on distributed sampling of spanning trees. In the \cc{} model, we have identified some potential barriers for improving the $\tilde{O}(n^{1/2+\alpha})$ running time using our approach. New ideas might be needed to overcome these barriers.
A completely different direction involves implementing the MCMC approach, e.g., using the down-up walk from \cite{UpDownWalk}, efficiently in the \cc{} model. The problem is also poorly understood in other models of distributed computing such as the MPC and \congest{} models.

\paragraph{Acknowledgments.} 
Sriram Pemmaraju was supported in part by the National Science Foundation (NSF) grant CCF-2402835.

\bibliographystyle{plain}
\bibliography{main}

\begin{thebibliography}{10}

\bibitem{AAB}
David~J. Aldous.
\newblock The random walk construction of uniform spanning trees and uniform labelled trees.
\newblock {\em SIAM Journal on Discrete Mathematics}, 3(4):450--465, 1990.

\bibitem{CoverTime}
Romas Aleliunas, Richard~M. Karp, Richard~J. Lipton, Laszlo Lovasz, and Charles Rackoff.
\newblock Random walks, universal traversal sequences, and the complexity of maze problems.
\newblock In {\em 20th Annual Symposium on Foundations of Computer Science (sfcs 1979)}, pages 218--223, 1979.

\bibitem{ParWeightedSpanningTree}
Nima Anari, Nathan Hu, Amin Saberi, and Aaron Schild.
\newblock {Sampling Arborescences in Parallel}.
\newblock In James~R. Lee, editor, {\em 12th Innovations in Theoretical Computer Science Conference (ITCS 2021)}, volume 185 of {\em Leibniz International Proceedings in Informatics (LIPIcs)}, pages 83:1--83:18, Dagstuhl, Germany, 2021. Schloss Dagstuhl -- Leibniz-Zentrum f{\"u}r Informatik.

\bibitem{UpDownWalk}
Nima Anari, Kuikui Liu, Shayan~Oveis Gharan, Cynthia Vinzant, and Thuy-Duong Vuong.
\newblock Log-concave polynomials iv: approximate exchange, tight mixing times, and near-optimal sampling of forests.
\newblock In {\em Proceedings of the 53rd Annual ACM SIGACT Symposium on Theory of Computing}, STOC 2021, page 408–420, New York, NY, USA, 2021. Association for Computing Machinery.

\bibitem{AndoniNOYSTOC2014}
Alexandr Andoni, Aleksandar Nikolov, Krzysztof Onak, and Grigory Yaroslavtsev.
\newblock Parallel algorithms for geometric graph problems.
\newblock STOC '14, page 574–583, New York, NY, USA, 2014. Association for Computing Machinery.

\bibitem{ATSP}
Arash Asadpour, Michel~X. Goemans, Aleksander Mądry, Shayan~Oveis Gharan, and Amin Saberi.
\newblock An olog n/log log n-approximation algorithm for the asymmetric traveling salesman problem.
\newblock {\em Oper. Res.}, 65(4):1043–1061, aug 2017.

\bibitem{BCX}
Bahman Bahmani, Kaushik Chakrabarti, and Dong Xin.
\newblock Fast personalized pagerank on mapreduce.
\newblock In {\em Proceedings of the 2011 ACM SIGMOD International Conference on Management of Data}, SIGMOD '11, page 973–984, New York, NY, USA, 2011. Association for Computing Machinery.

\bibitem{FB}
Greg Barnes and Uriel Feige.
\newblock Short random walks on graphs.
\newblock {\em SIAM Journal on Discrete Mathematics}, 9(1):19--28, 1996.

\bibitem{BeameKSJACM2017}
Paul Beame, Paraschos Koutris, and Dan Suciu.
\newblock Communication steps for parallel query processing.
\newblock {\em J. ACM}, 64(6), oct 2017.

\bibitem{doi:10.1137/19M1286955}
Ruben Becker, Sebastian Forster, Andreas Karrenbauer, and Christoph Lenzen.
\newblock Near-optimal approximate shortest paths and transshipment in distributed and streaming models.
\newblock {\em SIAM Journal on Computing}, 50(3):815--856, 2021.

\bibitem{BR94}
Mihir Bellare and John Rompel.
\newblock Randomness-efficient oblivious sampling.
\newblock In {\em Proceedings 35th Annual Symposium on Foundations of Computer Science}, pages 276--287. IEEE, 1994.

\bibitem{BAB}
Andrei Broder.
\newblock Generating random spanning trees.
\newblock In {\em 30th Annual Symposium on Foundations of Computer Science}, pages 442--447, 1989.

\bibitem{BK89}
Andrei~Z. Broder and Anna~R. Karlin.
\newblock Bounds on the cover time.
\newblock In {\em [Proceedings 1988] 29th Annual Symposium on Foundations of Computer Science}, pages 479--487, 1988.

\bibitem{CambusKPU23}
M{\'{e}}lanie Cambus, Fabian Kuhn, Shreyas Pai, and Jara Uitto.
\newblock Time and space optimal massively parallel algorithm for the 2-ruling set problem.
\newblock In Rotem Oshman, editor, {\em 37th International Symposium on Distributed Computing, {DISC} 2023, October 10-12, 2023, L'Aquila, Italy}, volume 281 of {\em LIPIcs}, pages 11:1--11:12. Schloss Dagstuhl - Leibniz-Zentrum f{\"{u}}r Informatik, 2023.

\bibitem{distFlipDynamics}
Charlie Carlson, Daniel Frishberg, and Eric Vigoda.
\newblock {Improved Distributed Algorithms for Random Colorings}.
\newblock In Alysson Bessani, Xavier D\'{e}fago, Junya Nakamura, Koichi Wada, and Yukiko Yamauchi, editors, {\em 27th International Conference on Principles of Distributed Systems (OPODIS 2023)}, volume 286 of {\em Leibniz International Proceedings in Informatics (LIPIcs)}, pages 13:1--13:18, Dagstuhl, Germany, 2024. Schloss Dagstuhl -- Leibniz-Zentrum f{\"u}r Informatik.

\bibitem{censorhillel2024distributedsubgraphfindingprogress}
Keren Censor-Hillel.
\newblock Distributed subgraph finding: Progress and challenges, 2024.

\bibitem{MatrixMult}
Keren Censor-Hillel, Petteri Kaski, Janne~H. Korhonen, Christoph Lenzen, Ami Paz, and Jukka Suomela.
\newblock Algebraic methods in the congested clique.
\newblock In {\em Proceedings of the 2015 ACM Symposium on Principles of Distributed Computing}, PODC '15, page 143–152, New York, NY, USA, 2015. Association for Computing Machinery.

\bibitem{commuteTimes}
A.~K. Chandra, P.~Raghavan, W.~L. Ruzzo, and R.~Smolensky.
\newblock The electrical resistance of a graph captures its commute and cover times.
\newblock In {\em Proceedings of the Twenty-First Annual ACM Symposium on Theory of Computing}, STOC '89, page 574–586, New York, NY, USA, 1989. Association for Computing Machinery.

\bibitem{MMST1}
Charles~J. Colbourn, Wendy~J. Myrvold, and Eugene Neufeld.
\newblock Two algorithms for unranking arborescences.
\newblock {\em Journal of Algorithms}, 20(2):268--281, 1996.

\bibitem{CzumajDPPODC2020}
Artur Czumaj, Peter Davies, and Merav Parter.
\newblock Simple, deterministic, constant-round coloring in the congested clique.
\newblock In {\em Proceedings of the 39th Symposium on Principles of Distributed Computing}, PODC '20, page 309–318, New York, NY, USA, 2020. Association for Computing Machinery.

\bibitem{CongestRandomWalk}
Atish Das~Sarma, Danupon Nanongkai, Gopal Pandurangan, and Prasad Tetali.
\newblock Distributed random walks.
\newblock {\em J. ACM}, 60(1), feb 2013.

\bibitem{DolevLP12}
Danny Dolev, Christoph Lenzen, and Shir Peled.
\newblock {``Tri, Tri Again'': Finding Triangles and Small Subgraphs in a Distributed Setting}.
\newblock In {\em {Proceedings of the 26th International Symposium on Distributed Computing (DISC)}}, pages 195--209, 2012.

\bibitem{DOLEV202321}
Shlomi Dolev and Daniel Khankin.
\newblock Random spanning trees for expanders, sparsifiers, and virtual network security.
\newblock {\em Computer Communications}, 212:21--34, 2023.

\bibitem{DoryParterJACM2022}
Michal Dory and Merav Parter.
\newblock Exponentially faster shortest paths in the congested clique.
\newblock {\em J. ACM}, 69(4), aug 2022.

\bibitem{DruckerKuhnOshmanPODC2014}
Andrew Drucker, Fabian Kuhn, and Rotem Oshman.
\newblock On the power of the congested clique model.
\newblock In {\em Proceedings of the 2014 ACM Symposium on Principles of Distributed Computing}, PODC '14, page 367–376, New York, NY, USA, 2014. Association for Computing Machinery.

\bibitem{SparsifierSampler}
David Durfee, John Peebles, Richard Peng, and Anup~B. Rao.
\newblock Determinant-preserving sparsification of sddm matrices.
\newblock {\em SIAM J. Comput.}, 49, 2020.

\bibitem{distSymBreaking}
Weiming Feng, Thomas~P. Hayes, and Yitong Yin.
\newblock Distributed symmetry breaking in sampling (optimal distributed randomly coloring with fewer colors), 2018.

\bibitem{distMetSampler}
Weiming Feng, Thomas~P. Hayes, and Yitong Yin.
\newblock Distributed metropolis sampler with optimal parallelism.
\newblock In {\em Proceedings of the Thirty-Second Annual ACM-SIAM Symposium on Discrete Algorithms}, SODA '21, page 2121–2140, USA, 2021. Society for Industrial and Applied Mathematics.

\bibitem{WhatCanBeSampledLocally}
Weiming Feng, Yuxin Sun, and Yitong Yin.
\newblock What can be sampled locally?
\newblock In {\em Proceedings of the ACM Symposium on Principles of Distributed Computing}, PODC '17, page 121–130, New York, NY, USA, 2017. Association for Computing Machinery.

\bibitem{distJVV}
Weiming Feng and Yitong Yin.
\newblock On local distributed sampling and counting.
\newblock In {\em Proceedings of the 2018 ACM Symposium on Principles of Distributed Computing}, PODC '18, page 189–198, New York, NY, USA, 2018. Association for Computing Machinery.

\bibitem{FischerGhaffariDISC2018}
Manuela Fischer and Mohsen Ghaffari.
\newblock A simple parallel and distributed sampling technique: Local glauber dynamics.
\newblock In Ulrich Schmid and Josef Widder, editors, {\em 32nd International Symposium on Distributed Computing, {DISC} 2018, New Orleans, LA, USA, October 15-19, 2018}, volume 121 of {\em LIPIcs}, pages 26:1--26:11. Schloss Dagstuhl - Leibniz-Zentrum f{\"{u}}r Informatik, 2018.

\bibitem{CCLaplacianSolver}
Sebastian Forster and Tijn de~Vos.
\newblock The laplacian paradigm in the broadcast congested clique.
\newblock In {\em Proceedings of the 2022 ACM Symposium on Principles of Distributed Computing}, PODC'22, page 335–344, New York, NY, USA, 2022. Association for Computing Machinery.

\bibitem{GraphSparsifier}
Wai~Shing Fung, Ramesh Hariharan, Nicholas~J.A. Harvey, and Debmalya Panigrahi.
\newblock A general framework for graph sparsification.
\newblock In {\em Proceedings of the Forty-Third Annual ACM Symposium on Theory of Computing}, STOC '11, page 71–80, New York, NY, USA, 2011. Association for Computing Machinery.

\bibitem{ghaffari17_distr_mis_all_all_commun}
Mohsen Ghaffari.
\newblock Distributed {MIS} via all-to-all communication.
\newblock In {\em Proceedings of the {ACM} Symposium on Principles of Distributed Computing, {PODC} 2017, Washington, DC, USA, July 25-27, 2017}, pages 141--149, 2017.

\bibitem{GhaffariGKMRPODC18}
Mohsen Ghaffari, Themis Gouleakis, Christian Konrad, Slobodan Mitrovic, and Ronitt Rubinfeld.
\newblock Improved massively parallel computation algorithms for mis, matching, and vertex cover.
\newblock In {\em Proceedings of the 2018 {ACM} Symposium on Principles of Distributed Computing, {PODC} 2018, Egham, United Kingdom, July 23-27, 2018}, pages 129--138, 2018.

\bibitem{GhaffariNowickiPODC2018}
Mohsen Ghaffari and Krzysztof Nowicki.
\newblock Congested clique algorithms for the minimum cut problem.
\newblock In {\em Proceedings of the 2018 ACM Symposium on Principles of Distributed Computing}, PODC '18, page 357–366, New York, NY, USA, 2018. Association for Computing Machinery.

\bibitem{ghaffari16_mst_log_star_round_conges_clique}
Mohsen Ghaffari and Merav Parter.
\newblock {MST in Log-Star Rounds of Congested Clique}.
\newblock In {\em Proceedings of the 2016 {ACM} Symposium on Principles of Distributed Computing, {PODC} 2016, Chicago, IL, USA, July 25-28, 2016}, pages 19--28, 2016.

\bibitem{STSP}
Shayan~Oveis Gharan, Amin Saberi, and Mohit Singh.
\newblock A randomized rounding approach to the traveling salesman problem.
\newblock In {\em 2011 IEEE 52nd Annual Symposium on Foundations of Computer Science}, pages 550--559, 2011.

\bibitem{Goldschmidt_2018}
Christina Goldschmidt.
\newblock Random minimum spanning trees, Oct 2018.

\bibitem{GoodrichSZISAAC2011}
Michael~T. Goodrich, Nodari Sitchinava, and Qin Zhang.
\newblock Sorting, searching, and simulation in the mapreduce framework.
\newblock In {\em Proceedings of the 22nd International Conference on Algorithms and Computation}, ISAAC'11, page 374–383, Berlin, Heidelberg, 2011. Springer-Verlag.

\bibitem{STandExpander}
Navin Goyal, Luis Rademacher, and Santosh Vempala.
\newblock Expanders via random spanning trees.
\newblock SODA '09, page 576–585, USA, 2009. Society for Industrial and Applied Mathematics.

\bibitem{LLLSampling}
Heng Guo, Mark Jerrum, and Jingcheng Liu.
\newblock Uniform sampling through the lov\'{a}sz local lemma.
\newblock {\em J. ACM}, 66(3), apr 2019.

\bibitem{MMST2}
Nicholas J.~A. Harvey and Keyulu Xu.
\newblock Generating random spanning trees via fast matrix multiplication.
\newblock In {\em Latin American Symposium on Theoretical Informatics}, 2016.

\bibitem{hegeman15_towar_optim_bound_conges_clique}
James~W. Hegeman, Gopal Pandurangan, Sriram~V. Pemmaraju, Vivek~B. Sardeshmukh, and Michele Scquizzato.
\newblock Toward optimal bounds in the congested clique: Graph connectivity and mst.
\newblock In {\em Proceedings of the 2015 ACM Symposium on Principles of Distributed Computing}, PODC '15, pages 91--100, New York, NY, USA, 2015. ACM.

\bibitem{HegemanPS14}
James~W. Hegeman, Sriram~V. Pemmaraju, and Vivek Sardeshmukh.
\newblock Near-constant-time distributed algorithms on a congested clique.
\newblock In Fabian Kuhn, editor, {\em Distributed Computing - 28th International Symposium, {DISC} 2014, Austin, TX, USA, October 12-15, 2014. Proceedings}, volume 8784 of {\em Lecture Notes in Computer Science}, pages 514--530. Springer, 2014.

\bibitem{Permanent}
Mark Jerrum, Alistair Sinclair, and Eric Vigoda.
\newblock A polynomial-time approximation algorithm for the permanent of a matrix with nonnegative entries.
\newblock {\em J. ACM}, 51(4):671–697, jul 2004.

\bibitem{JVV}
Mark Jerrum, Leslie~G. Valiant, and Vijay~V. Vazirani.
\newblock Random generation of combinatorial structures from a uniform distribution.
\newblock {\em Theor. Comput. Sci.}, 43:169--188, 1986.

\bibitem{jurdzinski18_mst_o_round_conges_clique}
Tomasz Jurdzi\'{n}ski and Krzysztof Nowicki.
\newblock Mst in o(1) rounds of congested clique.
\newblock In {\em Proceedings of the Twenty-Ninth Annual ACM-SIAM Symposium on Discrete Algorithms}, SODA '18, pages 2620--2632, Philadelphia, PA, USA, 2018. Society for Industrial and Applied Mathematics.

\bibitem{NTSP}
Anna~R. Karlin, Nathan Klein, and Shayan~Oveis Gharan.
\newblock A (slightly) improved approximation algorithm for metric tsp.
\newblock In {\em Proceedings of the 53rd Annual ACM SIGACT Symposium on Theory of Computing}, STOC 2021, page 32–45, New York, NY, USA, 2021. Association for Computing Machinery.

\bibitem{KarlinKleinGharanZhangSTOC2022}
Anna~R. Karlin, Nathan Klein, Shayan~Oveis Gharan, and Xinzhi Zhang.
\newblock An improved approximation algorithm for the minimum k-edge connected multi-subgraph problem.
\newblock In {\em Proceedings of the 54th Annual ACM SIGACT Symposium on Theory of Computing}, STOC 2022, page 1612–1620, New York, NY, USA, 2022. Association for Computing Machinery.

\bibitem{KarloffSuriVassilvitskiiSODA2010}
Howard Karloff, Siddharth Suri, and Sergei Vassilvitskii.
\newblock A model of computation for mapreduce.
\newblock In {\em Proceedings of the Twenty-First Annual ACM-SIAM Symposium on Discrete Algorithms}, SODA '10, page 938–948, USA, 2010. Society for Industrial and Applied Mathematics.

\bibitem{MK}
Jonathan~A. Kelner and Aleksander Madry.
\newblock Faster generation of random spanning trees.
\newblock In {\em 2009 50th Annual IEEE Symposium on Foundations of Computer Science}, pages 13--21, 2009.

\bibitem{Kirchoff}
Edward~C. Kirby, Roger~B. Mallion, Paul Pollak, and Pawel Skrzynski.
\newblock What kirchhoff actually did concerning spanning trees in electrical networks and its relationship to modern graph-theoretical work.
\newblock {\em Croatica Chemica Acta}, 89, 2016.

\bibitem{klauck15_distr_comput_large_graph_probl}
Hartmut Klauck, Danupon Nanongkai, Gopal Pandurangan, and Peter Robinson.
\newblock {Distributed Computation of Large-scale Graph Problems}.
\newblock In {\em Proceedings of the Twenty-Sixth Annual {ACM-SIAM} Symposium on Discrete Algorithms, {SODA} 2015, San Diego, CA, USA, January 4-6, 2015}, pages 391--410, 2015.

\bibitem{Kyng_2017}
Rasmus Kyng.
\newblock {\em Approximate Gaussian Elimination}.
\newblock PhD thesis, 2017.

\bibitem{Lenzen13}
Christoph Lenzen.
\newblock Optimal deterministic routing and sorting on the congested clique.
\newblock In {\em Proceedings of the 2013 ACM Symposium on Principles of Distributed Computing}, PODC '13, page 42–50, New York, NY, USA, 2013. Association for Computing Machinery.

\bibitem{LackiMOSSTOC2020}
Jakub \L{}\k{a}cki, Slobodan Mitrovi\'{c}, Krzysztof Onak, and Piotr Sankowski.
\newblock Walking randomly, massively, and efficiently.
\newblock In {\em Proceedings of the 52nd Annual ACM SIGACT Symposium on Theory of Computing}, STOC 2020, page 364–377, New York, NY, USA, 2020. Association for Computing Machinery.

\bibitem{LotkerPPPSPAA2003}
Zvi Lotker, Elan Pavlov, Boaz Patt-Shamir, and David Peleg.
\newblock Mst construction in o(log log n) communication rounds.
\newblock In {\em Proceedings of the Fifteenth Annual ACM Symposium on Parallel Algorithms and Architectures}, SPAA '03, page 94–100, New York, NY, USA, 2003. Association for Computing Machinery.

\bibitem{CommuteBound}
László Lovász.
\newblock Random walks on graphs: A survey.
\newblock 1993.

\bibitem{RadarPush}
Siqiang Luo.
\newblock Distributed pagerank computation: an improved theoretical study.
\newblock In {\em Proceedings of the Thirty-Third AAAI Conference on Artificial Intelligence and Thirty-First Innovative Applications of Artificial Intelligence Conference and Ninth AAAI Symposium on Educational Advances in Artificial Intelligence}, AAAI'19/IAAI'19/EAAI'19. AAAI Press, 2019.

\bibitem{LuoWuKaoInfoSciences2022}
Siqiang Luo, Xiaowei Wu, and Ben Kao.
\newblock Distributed pagerank computation with improved round complexities.
\newblock {\em Information Sciences}, 607:109--125, 2022.

\bibitem{STCounting}
Russell Lyons and Shayan Oveis~Gharan.
\newblock {Sharp Bounds on Random Walk Eigenvalues via Spectral Embedding}.
\newblock {\em International Mathematics Research Notices}, 2018(24):7555--7605, 05 2017.

\bibitem{madryThesis}
Aleksander M{\k{a}}dry.
\newblock {\em From graphs to matrices, and back: new techniques for graph algorithms}.
\newblock PhD thesis, Massachusetts Institute of Technology, 2011.

\bibitem{madryOther}
Aleksander M\k{a}dry, Damian Straszak, and Jakub Tarnawski.
\newblock Fast generation of random spanning trees and the effective resistance metric.
\newblock In {\em Proceedings of the Twenty-Sixth Annual ACM-SIAM Symposium on Discrete Algorithms}, SODA '15, page 2019–2036, USA, 2015. Society for Industrial and Applied Mathematics.

\bibitem{nowickiMST}
Krzysztof Nowicki.
\newblock A deterministic algorithm for the mst problem in constant rounds of congested clique.
\newblock In {\em Proceedings of the 53rd Annual ACM SIGACT Symposium on Theory of Computing}, STOC 2021, page 1154–1165, New York, NY, USA, 2021. Association for Computing Machinery.

\bibitem{ParterYogevDISC2018}
Merav Parter and Eylon Yogev.
\newblock Congested clique algorithms for graph spanners.
\newblock In {\em 32nd International Symposium on Distributed Computing, {DISC} 2018, New Orleans, LA, USA, October 15-19, 2018}, volume 121 of {\em LIPIcs}, pages 40:1--40:18, 2018.

\bibitem{DBLP:conf/fsttcs/PemmarajuS16}
Sriram~V. Pemmaraju and Vivek~B. Sardeshmukh.
\newblock Super-fast {MST} algorithms in the congested clique using o(m) messages.
\newblock In Akash Lal, S.~Akshay, Saket Saurabh, and Sandeep Sen, editors, {\em 36th {IARCS} Annual Conference on Foundations of Software Technology and Theoretical Computer Science, {FSTTCS} 2016, December 13-15, 2016, Chennai, India}, volume~65 of {\em LIPIcs}, pages 47:1--47:15. Schloss Dagstuhl - Leibniz-Zentrum f{\"{u}}r Informatik, 2016.

\bibitem{exactDistSampling}
Sriram~V. Pemmaraju and Joshua~Z. Sobel.
\newblock Exact distributed sampling.
\newblock page 558–575, Berlin, Heidelberg, 2023. Springer-Verlag.

\bibitem{shortcutAS}
Aaron Schild.
\newblock An almost-linear time algorithm for uniform random spanning tree generation.
\newblock In {\em Proceedings of the 50th Annual ACM SIGACT Symposium on Theory of Computing}, STOC 2018, page 214–227, New York, NY, USA, 2018. Association for Computing Machinery.

\bibitem{PRAMWalk}
Shang-Hua Teng.
\newblock Independent sets versus perfect matchings.
\newblock {\em Theoretical Computer Science}, 145(1):381--390, 1995.

\bibitem{Vadhan12}
Salil~P Vadhan.
\newblock Pseudorandomness.
\newblock {\em Foundations and Trends{\textregistered} in Theoretical Computer Science}, 7(1--3):1--336, 2012.

\bibitem{NewMatrixMult}
Virginia~Vassilevska Williams, Yinzhan Xu, Zixuan Xu, and Renfei Zhou.
\newblock New bounds for matrix multiplication: from alpha to omega.
\newblock In {\em ACM-SIAM Symposium on Discrete Algorithms}, 2023.

\bibitem{WilsonsAlg}
David~Bruce Wilson.
\newblock Generating random spanning trees more quickly than the cover time.
\newblock In {\em Proceedings of the Twenty-Eighth Annual ACM Symposium on Theory of Computing}, STOC '96, page 296–303, New York, NY, USA, 1996. Association for Computing Machinery.

\end{thebibliography}
\newpage
\section{Appendix: Exactly Sampling Spanning Trees}
\label{Exact Sampling Appendix}
Our main spanning tree algorithm as stated has three sources of error.  We show that the algorithm can be made to sample exactly from the uniform distribution with no error.  This carries a tradeoff in time, taking $\Tilde{O}(n^{2/3+\alpha}) = O(n^{0.824})$ rounds.  It is worth noting that for most practical uses, inverse polynomial error is sufficient, as even distinguishing between exact and approximate sampling will take a superpolynomial number of samples.

\begin{enumerate}
    \item The first source of error arises because there may be a phase where a walk fails to visit $\Omega(\sqrt{n})$ distinct vertices.  This results in an arbitrary spanning tree being returned.
    \item The second source of error arises from using approximate probabilities rather than true probabilities to generate midpoints.
    \item Finally, the third source of error arises from using approximate matching sampling rather than exact matching sampling and approximate probabilities in the matching algorithm.
\end{enumerate}

\subsection{Solving Problem 1}
The clear solution to the first problem is to convert the algorithm from a Monte Carlo to Las Vegas approach.  In other words, if the walk generated during a non-final phase visits fewer than $\rho$ distinct vertices, we double the target length $\ell$, sample a new end vertex, and continue the walk generation.  In particular, we sample the new end vertex in the same way as before but use the current end vertex as the starting point.   We repeat this indefinitely until we visit enough distinct vertices.  Note that we can choose the initial target walk length to be high enough, $\Tilde{O}(n^3)$, such that we never need to generate more than one walk in a phase with high probability.  The walk that is returned is now guaranteed to contain exactly $\rho$ distinct vertices.

\subsection{Solving Problem 2}
For now, ignore the issue of sampling matchings.  We will show that we can completely remove the error of midpoint generation.  The same logic can be applied to the shortcut graph as well as sampling the walk endpoint.  These ideas are credited to Propp, see \cite{madryOther,madryThesis}.

Suppose that so far, the endpoint and every midpoint have been sampled and placed exactly.  We are now interested in sampling a midpoint $m$ between a start end pair $(p,q)$.  In particular, for some transition matrix $\mathcal{W}$, we want to sample $m$ from the distribution $P(m) = \frac{\mathcal{W}[p,m]\mathcal{W}[m,q]}{\mathcal{W}^2[p,q]}$.  We will assume that $\mathcal{W}^2[p,q] \geq \frac{1}{n^c}$, for some fixed $c>0$.  In fact, we will choose $c$ such that $\mathcal{W}^2[p,q] \leq \frac{2}{n^c}$ for any midpoint in the entire algorithm with probability at most $\frac{1}{n}$.  This choice of $c$ exists since $(p,q)$ appear sequentially in one partial walk.  Thus an event with probability $\mathcal{W}^2[p,q]$ has already occurred and we are only sampling a polynomial number of midpoints.  We can verify that $\mathcal{W}^2[p,q] \geq \frac{1}{n^c}$ by using our approximation of $\mathcal{W}^2$ with subtractive error of $\frac{1}{n^c}$ and checking if the approximation is greater than $\frac{1}{n^c}$.  In the very rare case where verification fails, we resort to centrally collecting the entire network at $\sfM$ and completing the algorithm by brute force.  Otherwise we can use Lemma \ref{MidpointTVDLemma} to approximate the distribution with total variation distance at most $\delta/2$ by choosing $\beta = \frac{\delta}{4 n^{c+1}}$.  We can transform this into an approximation with subtractive error $\delta$ by subtracting $\frac{\delta}{2}$ from each entry and then setting each negative entry to be zero.

Let the approximate probabilities with subtractive error be $\hat{P}$.  Let the sum of all of the approximate probabilties be $Z$.  Note that $Z \geq 1-n\delta$.  Finally we draw a random number $r$.  If $r \leq Z$, we draw the midpoint from the unnormalized distribution implied by $\hat{P}$.  Otherwise, we fail and again we simply collect the entire network by brute force.  Then we pick the midpoint conditioned on $r > Z$ and solve the remaining algorithm locally.  To avoid adding to the expected runtime we need to avoid ever collecting the graph with high probability, specifically we need this probability to be $O(\frac{1}{n})$.  Note that it is sufficient, since we sample fewer than $n^4$ midpoints with high probability, that $n\delta \leq \frac{1}{n^5}$, so we choose $\delta = \frac{1}{n^6}$.

\subsection{Solving Problem 3}
Unfortunately, we do not know how to eliminate the error associated with sampling perfect matchings.  Even if we could sample perfect matchings exactly, for example by using exponential time on machine $\sfM$, we still have the problem that $\sfM$ only has approximate probabilities for each transition matrix entry.  Since the set of all perfect matchings is exponential in size, we cannot use the trick we used to sample midpoints exactly.

As discussed earlier, if each machine representing  start-end pair $(p,q)$ could send its sequence $\Pi_{p,q}$ to $\sfM$ we could sample random walks exactly.  We strike a compromise between $\sfM$ receiving only a multiset of midpoints and $\sfM$ receiving each entire sequence $\Pi_{p,q}$.  The key insight is that each permutation of $\Pi_{p,q}$ has the same probability, assuming the absolute final midpoint remains the same.  This is because all midpoints in a given sequence $\Pi_{p,q}$ were sampled between the same start-end pair $(p,q)$.  Thus we can remove any error from sampling matchings by sending a compressed version of $\Pi_{p,q}$, its corresponding multiset, to $\sfM$.  As before we also need to separately send the chronologically final midpoint.  Then $\sfM$ can resample each sequence by choosing a uniformly random permutation of each multiset.

Note that we need $\Theta(\sqrt{n})$ words to represent each multiset, since that is how many distinct vertices the walk visits.  $\sfM$ does not have the bandwidth to receive $\sqrt{n}$ words from all $n$ machines.  This necessitates reducing the number of distinct vertices visited per phase.  In particular, we choose to visit $\sqrt[3]{n}$ distinct vertices per phase.  With this choice, there are only $n^{2/3}$ machines representing start-end pairs.  Thus $\sfM$ needs to receive a total of $\sqrt[3]{n}$ words from $n^{2/3}$ machines for a total of $n$ words.  This can be achieved in $O(1)$ rounds.  Reducing the number of distinct vertices visited per phase increases the total runtime to $\Tilde{O}(n^{2/3+\alpha})$.
\end{document}